\tikzset{>=stealth}
\newcommand{\fl}{\lambda}
\newcommand{\suflink}{\mathcal{\sigma}}
\newcommand{\ST}{ST}
\newcommand*{\suf}{\mathbin{\tikz [baseline=-0.1ex,->, dotted,thick] \draw (0pt,0.5ex) -- (1.1em,0.5ex);}}%
\newtheorem{theorem}{Theorem}
\newtheorem{lemma}{Lemma}
\newtheorem{definition}{Definition}
\newtheorem{proposition}{Proposition}
\newtheorem{observation}{Observation}
\begin{document}
\title{A Suffix Tree Or Not A Suffix Tree?\footnote{An extended abstract of this paper appeared at IWOCA 2014.}}

\author{
\small Tatiana Starikovskaya, National Research University Higher School of Economics (HSE)\thanks{Partly supported by Dynasty Foundation.} \and
\small Hjalte Wedel Vildh{\o}j, Technical University of Denmark, DTU Compute
}

\date{\empty}

\maketitle

\begin{abstract}
In this paper we study the structure of suffix trees. Given an unlabeled tree $\tau$ on $n$ nodes and suffix links of its internal nodes, we ask the question ``Is $\tau$ a suffix tree?", i.e., is there a string $S$ whose suffix tree has the same topological structure as $\tau$? We place no restrictions on $S$, in particular we do not require that $S$ ends with a unique symbol. This corresponds to considering the more general definition of \emph{implicit} or \emph{extended} suffix trees. Such general suffix trees have many applications and are for example needed to allow efficient updates when suffix trees are built online. We prove that $\tau$ is a suffix tree if and only if it is realized by a string $S$ of length $n-1$, and we give a linear-time algorithm for inferring $S$ when the first letter on each edge is known. This generalizes the work of I~et~al.~[Discrete Appl. Math. 163, 2014].
\end{abstract}

\section{Introduction}
The suffix tree was introduced by Peter Weiner in 1973~\cite{WeinerSTconstruction} and remains one of the most popular and widely used text indexing data structures (see~\cite{40yearsTextIndexing} and references therein). In static applications it is commonly assumed that suffix trees are built only for strings with a unique end symbol (often denoted \$), thus ensuring the useful one-to-one correspondance between leaves and suffixes. In this paper we view such suffix trees as a special case and refer to them as \emph{\$-suffix trees}. Our focus is on suffix trees of \emph{arbitrary strings}, which we simply call \emph{suffix trees} to emphasize that they are more general than \$-suffix trees\footnote{In the literature the standard terminology is \emph{suffix trees} for \$-suffix trees and \emph{extended/implicit suffix trees}~\cite{ExtendedST,Gusfield} for suffix trees of strings not ending with \$.}. Contrary to \$-suffix trees, the suffixes in a suffix tree can end in internal non-branching locations of the tree, called \emph{implicit suffix nodes}.

Suffix trees for arbitrary strings are not only a nice generalization, but are required in many applications. For example in online algorithms that construct the suffix tree of a left-to-right streaming text (e.g., Ukkonen's algorithm~\cite{ukkonen:on-line}), it is necessary to maintain the implicit suffix nodes to allow efficient updates. Despite their essential role, the structure of suffix trees is still not well understood. For instance, it was only recently proved that each internal edge in a suffix tree can contain at most one implicit suffix node~\cite{ImplicitSuffixNodes}.

In this paper we prove some new properties of suffix trees and show how to decide whether suffix trees can have a particular structure. Structural properties of suffix trees are not only of theoretical interest, but are essential for analyzing the complexity and correctness of algorithms using suffix trees.

Given an unlabeled ordered rooted tree $\tau$ and suffix links of its internal nodes, the \emph{suffix tree decision problem} is to decide if there exists a string $S$ such that the suffix tree of $S$ is isomorphic to $\tau$. If such a string exists, we say that \emph{$\tau$ is a suffix tree} and that \emph{$S$ realizes $\tau$}. If $\tau$ can be realized by a string $S$ having a unique end symbol \$, we additionally say that \emph{$\tau$ is a \$-suffix tree}.
See \autoref{fig:threetrees} for examples of a \$-suffix tree, a suffix tree, and a tree which is not a suffix tree. In all figures in this paper leaves are black and internal nodes are white.

\begin{figure}[t]
\definecolor{nodeStroke}{HTML}{000000}\definecolor{nodeFill}{HTML}{FFFFFF}\definecolor{edgeStroke}{HTML}{000000}
\centering
\subfloat[]{
\begin{tikzpicture}[yscale=0.13,xscale=0.24,x=1pt,y=1pt,every node/.style={draw=nodeStroke,line width=0.5pt,circle,inner sep=0,minimum size=4pt}] 
\path[use as bounding box] (1250,-330) rectangle (1600,-710);
\draw {[shift={(1406,-208)}] node[fill=white] (n1) {}
  {[shift={(-112,-120)}] node[fill] (n2) {}}
  {[shift={(0,-120)}] node[fill=white] (n8) {}
    {[shift={(-30,-120)}] node[fill] (n9) {}}
    {[shift={(30,-120)}] node[fill] (n10) {}
      {[shift={(-30,-120)}] node[fill] (n17) {}}
      {[shift={(30,-120)}] node[fill] (n18) {}}
    }
  }
  {[shift={(110,-120)}] node[fill=white] (n11) {}
    {[shift={(-30,-120)}] node[fill] (n12) {}}
    {[shift={(30,-120)}] node[fill] (n16) {}}
  }
}
;
\draw[edgeStroke, line width=0.5pt] (n1) -- (n2) (n1) -- (n8) (n8) -- (n9) (n8) -- (n10) (n10) -- (n17) (n10) -- (n18) (n1) -- (n11) (n11) -- (n12) (n11) -- (n16) ;

\draw[->,thick,dotted, bend left] (n8) edge (n1);
\draw[->,thick,dotted] (n11) edge (n8);
\end{tikzpicture}
}
\qquad\qquad
\subfloat[]{
\begin{tikzpicture}[scale=0.13,x=1pt,y=1pt,every node/.style={draw=nodeStroke,line width=0.5pt,circle,inner sep=0,minimum size=4pt}] 
\path[use as bounding box] (1200,-330) rectangle (1600,-710);
\draw {[shift={(1406,-208)}] node[fill=white] (n1) {}
  {[shift={(-92,-120)}] node[fill=white] (n8) {}
    {[shift={(-46,-120)}] node[fill] (n9) {}}
    {[shift={(46,-120)}] node[fill] (n10) {}
      {[shift={(-46,-120)}] node[fill] (n17) {}}
      {[shift={(46,-120)}] node[fill] (n18) {}}
    }
  }
  {[shift={(92,-120)}] node[fill=white] (n11) {}
    {[shift={(-46,-120)}] node[fill] (n12) {}}
    {[shift={(46,-120)}] node[fill] (n16) {}}
  }
}
;
\draw[edgeStroke, line width=0.5pt] (n1) -- (n8) (n8) -- (n9) (n8) -- (n10) (n10) -- (n17) (n10) -- (n18) (n1) -- (n11) (n11) -- (n12) (n11) -- (n16) ;

\draw[->,thick,dotted, bend left] (n8) edge (n1);
\draw[->,thick,dotted] (n11) edge (n8);
\draw[->,thick,dotted] (n10) edge (n11);
\end{tikzpicture}
}
\qquad\qquad
\subfloat[]{
\begin{tikzpicture}[scale=0.13,x=1pt,y=1pt,every node/.style={draw=nodeStroke,circle,inner sep=0,line width=0.5pt,minimum size=4pt}] 
\path[use as bounding box] (1200,-330) rectangle (1600,-710);
\draw {[shift={(1406,-208)}] node[fill=white] (n1) {}
  {[shift={(-92,-120)}] node[fill=white] (n8) {}
    {[shift={(-46,-120)}] node[fill] (n9) {}}
    {[shift={(46,-120)}] node[fill] (n10) {}}
  }
  {[shift={(92,-120)}] node[fill=white] (n11) {}
    {[shift={(-46,-120)}] node[fill=white] (n12) {}
      {[shift={(-46,-120)}] node[fill=white] (n20) {}
        {[shift={(-46,-120)}] node[fill] (n22) {}}
        {[shift={(46,-120)}] node[fill] (n23) {}}
      }
      {[shift={(46,-120)}] node[fill] (n21) {}}
    }
    {[shift={(46,-120)}] node[fill] (n16) {}}
  }
}
;
\draw[edgeStroke, line width=0.5pt] (n1) -- (n8) (n8) -- (n9) (n8) -- (n10) (n1) -- (n11) (n11) -- (n12) (n12) -- (n20) (n20) -- (n22) (n20) -- (n23) (n12) -- (n21) (n11) -- (n16) ;

\draw[->,thick,dotted, bend left] (n20) edge (n12);
\draw[->,thick,dotted, bend left] (n12) edge (n11);
\draw[->,thick,dotted] (n11) edge (n8);
\draw[->,thick,dotted, bend left] (n8) edge (n1);
\end{tikzpicture}
	\label{fig:not_st}
}
\caption{Three potential suffix trees. (a) is a \$-suffix tree, e.g. for \texttt{ababa\$}. (b) is not a \$-suffix tree, but it is a suffix tree, e.g. for \texttt{abaabab}. (c) is not a suffix tree.\label{fig:threetrees}}
\end{figure}
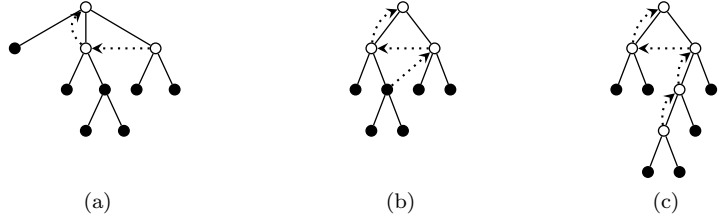

I~et~al.~\cite{InferringStrings} recently considered the suffix tree decision problem and showed how to decide if $\tau$ is a \$-suffix tree in $O(n)$ time, assuming that the first letter on each edge of $\tau$ is also known. Concurrently with our work, another approach was developed in~\cite{Cazaux2014}. There the authors show how to decide if $\tau$ is a \$-suffix tree without knowing the first letter on each edge, but also introduce the assumption that $\tau$ is an unordered tree.

Deciding if $\tau$ is a suffix tree is much more involved than deciding if it is a \$-suffix tree, mainly because we can no longer infer the length of a string that realizes $\tau$ from the number of leaves. Without an upper bound on the length of such a string, it is not even clear how to solve the problem by an exhaustive search. In this paper, we give such an upper bound, show that it is tight, and give a linear time algorithm for deciding whether $\tau$ is a suffix tree when the first letter on each edge is known.

\subsection{Our Results}
In \autoref{sec:suffixtrees}, we start by settling the question of the sufficient length of a string that realizes $\tau$.
\begin{theorem}\label{thm:numberofsuffixnodes}
An unlabeled tree $\tau$ on $n$ nodes is a suffix tree if and only if it is realized by a string of length $n-1$.
\end{theorem}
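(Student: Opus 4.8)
The converse direction — a length-$(n-1)$ realization certainly witnesses that $\tau$ is a suffix tree — is immediate, so I would concentrate on the forward direction: assuming $\tau$ is realized by \emph{some} string $S$, produce one of length exactly $n-1$. The plan is to set up two one-letter ``moves'' on realizing strings — deleting the last letter, and appending a well-chosen letter — and to show that the deletion is available whenever the current length is $\ge n$ and the appending is always available; iterating the appropriate move then lands on length exactly $n-1$.

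Both moves would be analyzed through online (Ukkonen-style) construction, via the following tool: appending a letter $c$ to a string $W$ changes neither the shape of the suffix tree nor any suffix link \emph{precisely} when $c$ already follows the longest repeated suffix $\mathrm{lrs}(W)$ of $W$ somewhere inside $W$ — this is exactly Ukkonen's ``show-stopper'' case, in which no edge is split and no leaf is created. Using the standard count $\#\mathrm{leaves}(W)=|W|-|\mathrm{lrs}(W)|$, this condition is equivalent to $|\mathrm{lrs}(Wc)|=|\mathrm{lrs}(W)|+1$. Appending is therefore always possible when $|S|\ge 1$: take $c$ to be the letter following a non-suffix occurrence of $\mathrm{lrs}(S)$, or, if $\mathrm{lrs}(S)=\epsilon$, any letter occurring in $S$. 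Consequently $\tau$ is realized by strings of every length $\ge|S|$, so it suffices to exhibit a realization of length $\le n-1$ (the case $n=1$ being trivial).

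For that I would shorten: I claim that if $S$ realizes $\tau$ and $m:=|S|\ge n$, then $S[1..m-1]$ still realizes $\tau$, i.e.\ (by the criterion above, applied with $W=S[1..m-1]$, $c=S[m]$) that $|\mathrm{lrs}(S)|\le|\mathrm{lrs}(S[1..m-1])|$ cannot occur. Suppose it did, and set $t:=|\mathrm{lrs}(S[1..m-1])|$. Then the length-$(t{+}1)$ suffix $S[m-t..m]$ of $S$ occurs only once in $S$, whereas $Z:=S[m-t..m-1]$ — the longest repeated suffix of $S[1..m-1]$ — occurs at least twice in $S$; at the occurrence ending at position $m-1$ it is followed by $S[m]$, and at every other occurrence by a different letter (otherwise $S[m-t..m]$ would repeat). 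Hence the locus of $Z$ is a branching node at string-depth $t$, and walking suffix links from it — using the standard fact that a suffix link of an internal node points to an internal node (a word with two distinct right extensions has every suffix with two distinct right extensions) — yields branching nodes at string-depths $t,t-1,\dots,1$, hence at least $t$ non-root internal nodes. Therefore
\[
n=\#\mathrm{leaves}+\#\mathrm{internal}\ \ge\ \bigl(m-|\mathrm{lrs}(S)|\bigr)+(t+1)\ \ge\ \bigl(m-|\mathrm{lrs}(S)|\bigr)+\bigl(|\mathrm{lrs}(S)|+1\bigr)=m+1,
\]
contradicting $m\ge n$. So as long as a realization has length $>n-1$ we may keep deleting its last letter, reaching length $n-1$; if the original realization is shorter than $n-1$ we grow it instead; and otherwise it already has the right length.

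The one genuinely nontrivial ingredient I anticipate is this last lemma: manufacturing a chain of $t$ distinct internal nodes out of the lone hypothesis that the final letter ``cuts'' the longest repeated suffix. The online-construction criterion, the leaf count $\#\mathrm{leaves}=|S|-|\mathrm{lrs}(S)|$, and the bookkeeping of the iteration are all routine, though it is worth verifying the degenerate cases ($n=1$ and unary strings) separately, where one must apply the active-point/show-stopper reasoning with a little care.
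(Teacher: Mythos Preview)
Your proof is correct. Both your argument and the paper's establish the nontrivial direction by (i) showing one can always append a letter without changing the tree (your appending move is exactly the paper's Lemma~5), and (ii) producing a realization of length at most $n-1$; the substantive difference lies in~(ii). The paper first trims the realization so that the shortest leaf edge has length~$1$ (its Lemma~4) and then bounds the length in one shot via the suffix chain: the last leaf $\ell$ on the chain has string depth at most $I$ (the number of internal nodes, since $par(\ell)$ lies on a suffix-link path of internal nodes ending at the root), giving $|S'|\le L+I-1=n-1$. You instead delete letters one at a time and argue by contradiction that a deletion cannot fail while $|S|\ge n$, extracting a chain of $t+1$ internal nodes from the branching locus of $\mathrm{lrs}(S[1..m-1])$. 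The underlying counting insight---an explicit internal node at string depth $d$ forces at least $d+1$ internal nodes via the suffix-link path to the root---is identical in both proofs; your packaging avoids the preliminary trimming lemma and yields the pleasant intermediate statement ``every realization of length $\ge n$ can be shortened by one letter,'' while the paper's route is a bit more direct once that lemma is in hand.
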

As far as we are aware, there were no previous upper bounds on the length of a shortest string realizing $\tau$. The bound implies an exhaustive search algorithm for solving the suffix tree decision problem, even when the suffix links are not provided. In terms of $n$, this upper bound is tight, since e.g. stars on $n$ nodes are realized only by strings of length at least $n-1$.

The main part of the paper is devoted to the suffix tree decision problem. We generalize the work of I~et~al.~\cite{InferringStrings} and show in \autoref{sec:algorithm} how to decide if $\tau$ is a suffix tree.
\begin{theorem}\label{thm:main}
Let $\tau$ be a tree with $n$ nodes, annotated with suffix links of internal nodes and the first letter on each edge. There is an $O(n)$ time algorithm for deciding if $\tau$ is a suffix tree.
\end{theorem}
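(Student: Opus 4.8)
The plan is to turn the decision problem into a \emph{reconstruction} problem. By \autoref{thm:numberofsuffixnodes}, if $\tau$ is a suffix tree at all then it is realized by some string of length $m=n-1$, so it suffices to decide whether such a string exists and, in the positive case, to output one. The algorithm will compute from the annotations a single candidate string $S$ of length $n-1$ (determined up to renaming of the alphabet), build the suffix tree of $S$ in $O(n)$ time by any linear-time construction, and then test in $O(n)$ time whether it is isomorphic to $\tau$ as an ordered tree with matching suffix links. With this verification step at the end, correctness reduces to the single claim that \emph{whenever $\tau$ is a suffix tree, the reconstruction succeeds and produces a realizing string}: a spurious candidate is caught by the final test, and a genuine inconsistency encountered during reconstruction makes the algorithm answer ``no''.

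First I would recover the string-depths of the internal nodes of $\tau$. The suffix link of the locus of a string $a\beta$ is the locus of $\beta$, so following suffix links strictly decreases string-depth by one and, since the locus of a branching string is itself branching, reaches the root after exactly string-depth-many steps. Hence the string-depth of an internal node equals its depth in the suffix-link tree, and all these depths are obtained by one traversal of that tree in $O(n)$ time. During this pass one also checks the local constraints forced by the given first letters and by the computed depths (first letters compatible along suffix links, depths strictly increasing from a node to each child, the root at depth $0$); a violation already certifies that $\tau$ is not a suffix tree. What is still unknown after this phase is the string-depth of each \emph{leaf} (equivalently, the length of each leaf edge), the placement of the \emph{implicit suffix nodes} inside the edges, and the letters filling the edge interiors.

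Next I would reconstruct $S$ from the single suffix-link chain running through all suffixes. In any suffix tree of $S$ the loci of $S[1..m], S[2..m], \dots, S[m..m], \varepsilon$ are linked successively by suffix links; this chain visits each leaf exactly once, visits those internal nodes that happen to be suffixes, and passes through precisely the implicit suffix nodes. Generalizing the argument of I~et~al.~\cite{InferringStrings} for \$-suffix trees: once the locus of $S[i..m]$ is known, the letter $S[i]$ is simply the first letter of the edge leaving the root on the root-to-locus path, which is part of the input, so it remains only to build the chain. One anchors the chain at its bottom end, the leaf holding the whole string $S[1..m]$ (which the reconstruction must itself identify, e.g.\ as a leaf that is not the suffix-link image of any suffix locus), and then walks suffix links until reaching the root: at an internal node the link is given, while at a leaf or an implicit node it has to be computed, and here the string-depths, the given first letters, and the recently established bound that every internal edge carries at most one implicit suffix node~\cite{ImplicitSuffixNodes} are exactly what make the next locus — and whether it is an existing node or a fresh implicit point on a specific edge — uniquely determined and findable in amortized constant time.

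The step I expect to be the main obstacle is precisely this last one, and it is where the problem is genuinely harder than the \$-case: without a unique end marker the number of leaves no longer equals $|S|$, so leaf-edge lengths and implicit-node positions are not pinned down by a counting argument and must instead be forced by combining \autoref{thm:numberofsuffixnodes} (which fixes $|S|=n-1$, hence the exact number of suffixes to be laid out along the chain) with the chain structure and the one-implicit-node-per-internal-edge bound; in particular one must argue that a length other than $n-1$, a missing node where a suffix must end, a depth mismatch, or a second implicit node forced onto an internal edge can only occur when $\tau$ is not a suffix tree. Once a candidate $S$ of length $n-1$ is produced — or an inconsistency is detected and the answer is ``no'' — the final verification closes the argument. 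Each phase is a constant number of tree traversals, the suffix-tree construction is linear, and the isomorphism test is linear, so the whole algorithm runs in $O(n)$ time.
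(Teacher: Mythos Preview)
Your outline has a genuine gap at exactly the point you yourself flag as ``the main obstacle'': the step where you \emph{walk the suffix chain} from a leaf. The issue is not just that the argument is incomplete --- the procedure is not well-defined with the data you have.

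Concretely, suppose you are at a leaf $\ell$ and want the next locus. You know $par(\ell)$, its suffix-link target $p$, and the first letter $a$ on $(par(\ell)\to\ell)$; this tells you only that the next locus lies in the subtree below the $a$-child $x$ of $p$. If the target depth $\depth(\ell)-1$ exceeds $\depth(x)$, you must descend further from $x$, and the choice of child at $x$ depends on a letter of $S$ that you have \emph{not yet reconstructed}: the label of $\ell$ is $S[i..m]$, and navigating past $x$ needs $S[i+\depth(x)+1]$, whereas \autoref{obs:suffix_chain} only lets you read off $S[i]$ \emph{after} the $i$-th locus is known. So the reconstruction is circular, and the claim that ``the next locus is uniquely determined'' from string-depths, first letters, and the one-implicit-node bound is not true in general. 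For the same reason your proposed identification of the starting leaf (``a leaf that is not the suffix-link image of any suffix locus'') presupposes the leaf suffix links you are trying to compute. This is precisely why I~et~al.\ do not walk the chain directly even in the $\$$-case: the suffix tour graph encodes, for every leaf, only the \emph{subtree} its successor lies in, and the global Eulerian-cycle argument is what resolves the remaining ambiguity without knowing interior letters.

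The paper's proof follows a different route. It asks whether $\tau$ can be \emph{augmented} to a $\$$-suffix tree by hanging a $\$$-leaf at a parent that is either an existing node or an implicit node at distance~$1$ on some edge (using \autoref{lm:good_implicit_ST}); there are $O(n)$ candidate locations. It analyzes how the suffix tour graph of $\tau$ changes under each augmentation (Lemmas on the ``twist node'' and on implicit parents) and shows that the feasibility test for all candidates --- nonnegativity of $d$-values and existence of a single connected Euler cycle through the root and all leaves --- can be carried out simultaneously by two linear-time traversals (one of the suffix-link tree, one DFS of $\tau$ with a color-counting trick). If a location survives, the $\$$-tree is formed and I~et~al.'s algorithm outputs a realizing string; otherwise $\tau$ is not a suffix tree (\autoref{lm:ST}). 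Your final verification idea is sound as a safety net, but it cannot rescue the approach unless you can guarantee that whenever $\tau$ \emph{is} a suffix tree your reconstruction actually outputs \emph{some} realizing $S$ --- and that guarantee is exactly what the missing chain-walking step fails to provide.
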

In case $\tau$ is a suffix tree, the algorithm also outputs a string $S$ that realizes $\tau$. To obtain the result, we show several new properties of suffix trees, which may be of independent interest.

\subsection{Related Work} The problem of revealing structural properties and exploiting them to recover a string realizing a data structure has received a lot of attention in the literature. Besides $\$$-suffix trees, the problem has been considered for border arrays~\cite{BorderArray,BorderArrayBoundedAlphabet}, parameterized border arrays~\cite{ParameterizedBorderArraysCounting,ParameterizedBorderArraysVerifying,ParameterizedBorderArrays}, suffix arrays~\cite{GraphsArrays,journals/ita/DuvalL02,SuffixArrays2}, KMP failure tables~\cite{FailureTables1,FailureTables2}, prefix tables~\cite{PrefixTables}, cover arrays~\cite{CoverArray}, directed acyclic word graphs~\cite{GraphsArrays}, and directed acyclic subsequence graphs~\cite{GraphsArrays}.

\section{Suffix Trees\label{sec:suffixtrees}}
In this section we prove \autoref{thm:numberofsuffixnodes} and some new properties of suffix trees, which we will need to prove \autoref{thm:main}. 
We start by briefly recapitulating the most important definitions.

The \emph{suffix tree} of a string $S$ is a compacted trie on suffixes of $S$~\cite{Gusfield}. 
Branching nodes and leaves of the tree are called \emph{explicit nodes}, and positions on edges are called \emph{implicit nodes}. The \emph{label} of a node $v$ is the string on the path from the root to $v$, and the length of this label is called the \emph{string depth} of $v$. The \emph{suffix link} of an internal explicit node $v$ labeled by $a_1 a_2 \ldots a_m$ is a pointer to the node $u$ labeled by $a_2 a_3 \ldots a_m$. We use the notation $v \suf u$ and extend the definition of suffix links to leaves and implicit nodes as well. We will refer to nodes that are labeled by suffixes of $S$ as \emph{suffix nodes}. All leaves of the suffix tree are suffix nodes, and unless $S$ ends with a unique symbol $\$$, some implicit nodes and internal explicit nodes can be suffix nodes as well. 
Suffix links for suffix nodes form a path starting at the leaf labeled by $S$ and ending at the root. Following~\cite{ImplicitSuffixNodes}, we call this path the \emph{suffix chain}.

\begin{lemma}[\cite{ImplicitSuffixNodes}]
\label{lm:suffix_chain}
The suffix chain of the suffix tree can be partitioned into the following consecutive segments: (1) Leaves; (2) Implicit suffix nodes on leaf edges; (3) Implicit suffix nodes on internal edges; and (4) Suffix nodes that coincide with internal explicit nodes. (See~\autoref{fig:suffix_chain_a}.)
\end{lemma}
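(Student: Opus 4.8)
The plan is to prove the statement by showing that each of the four categories is preserved, in the appropriate direction, under advancing one step along the suffix chain. Since a step sends a suffix $S[i..n]$ to $S[i+1..n]$ and lowers the string depth by exactly one, the chain visits the suffixes in order of decreasing length, so these local preservation properties suffice. Two elementary facts drive everything: (i) the locus of $S[i..n]$ is a leaf if and only if $S[i..n]$ occurs exactly once in $S$ (otherwise $S[i..n]$ is a proper prefix of a longer suffix, necessarily of one that starts at an earlier position); and (ii) the classical observation that if the locus of $a\gamma$ is a branching node, where $a$ is a single letter, then the locus of $\gamma$ is a branching node or the root — two distinct right extensions of $a\gamma$ yield two distinct right extensions of $\gamma$.

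First I would establish the two ``outer'' segments. Leaves form a prefix of the chain: if $S[i..n]$ is not a leaf it occurs at some position $j<i$, so $S[i+1..n]$ occurs at position $j+1<i+1$, hence is not a leaf either. Dually, the internal explicit nodes together with the root form a suffix of the chain: by (ii), if $S[i..n]$ is branching then $S[i+1..n]$ is branching or the root, and the root terminates the chain. Therefore the suffixes strictly between the leaf-prefix and the branching/root-suffix are exactly the implicit suffix nodes, and it only remains to see that among these the ones on leaf edges precede the ones on internal edges.

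That last step is the one I expect to be the main obstacle. Suppose $S[i..n]$ is an implicit node on an internal edge, and let $y$ be the branching node at its lower end, so the label of $y$ equals $S[i..n]\,w$ for some nonempty $w$. Applying (ii) to $a\gamma = S[i..n]\,w$, the locus of $S[i+1..n]\,w$ is branching (it is not the root, as $w\neq\varepsilon$); call it $y'$. Since $S[i+1..n]$ is a prefix of the label of $y'$, its locus $z$ is $y'$ or a proper ancestor of $y'$; in the first case $z$ is branching, and in the second the explicit node at the lower end of the edge carrying $z$ lies on the root-to-$y'$ path and is a non-root, non-leaf explicit node, hence branching as well, since a compacted suffix tree has no non-branching internal explicit node. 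Either way, $S[i+1..n]$ is an internal explicit node or an implicit node on an internal edge — never a leaf and never an implicit node on a leaf edge. Thus once the chain reaches category (3) it never returns to (1) or (2), and together with the two monotonicity statements above this yields the decomposition into the four consecutive (possibly empty) segments. The delicate point is precisely the case split on whether $z$ is explicit or implicit, and checking in each case that the edge containing $z$ has an internal lower endpoint; the rest is bookkeeping.
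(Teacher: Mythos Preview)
The paper does not actually prove this lemma; it is quoted verbatim from the cited reference \cite{ImplicitSuffixNodes} and no argument is supplied in the present paper. So there is no in-paper proof to compare against.

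Your argument is correct and complete. The three closure properties you isolate --- non-leaves stay non-leaves, branching loci stay branching (or root), and ``implicit on an internal edge'' stays in $(3)\cup(4)$ --- are exactly what is needed, and your verification of the third one via the lower endpoint $y$ of the internal edge and its suffix-link image $y'$ is clean. One cosmetic remark: since $w\neq\varepsilon$, the locus $z$ of $S[i{+}1..n]$ is in fact always a \emph{proper} ancestor of $y'$, so the case $z=y'$ in your case split never arises; but treating it does no harm. The only potentially delicate point you flag --- that the lower end of the edge through $z$ lies on the root-to-$y'$ path and is therefore a branching internal node --- is handled correctly, using that in a compacted trie every non-root explicit internal node is branching.
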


\begin{figure}
\centering
          \subfloat[]{
          	\label{fig:suffix_chain_a}
 \begin{tikzpicture}[scale=0.8]
 \path[use as bounding box] (-4,0) rectangle (4,-5);
                \draw[every node/.style={draw, circle,minimum size=4pt,inner sep=0pt}] (0,0) node (root) {}
                      (root) +(-160:2cm) node (a) {}
                              (a) +(-110:5cm) node[fill] (aababaababaa) {} 
                              (a) +(-70:1cm) node (aba) {}
                                      (aba) +(-110:3.4cm) node[fill] (abaababaababaa) {} 
                                      (aba) +(-70:3.4cm) node[fill] (ababaababaa) {} 
                      (root) +(-20:2cm) node (ba) {}
                              (ba) +(-110:3.7cm) node[fill] (baababaababaa) {} 
                              (ba) +(-70:3.5cm) node[fill] (babaababaa) {} 
                 ;
                 \draw[every node/.style={draw, circle,fill=black,minimum size=2pt,inner sep=0pt}]
                        ($(a)!0.1!(aababaababaa)$) node (aa) {} 
                        ($(ba)!0.2!(baababaababaa)$) node (baa) {} 
                        ($(ba)!0.5!(babaababaa)$) node (babaa) {} 
                        ($(baa)!0.5!(baababaababaa)$) node (baababaa) {} 
                        ($(aba)!0.2!(abaababaababaa)$) node (abaa) {} 
                        ($(aba)!0.4!(ababaababaa)$) node (ababaa) {} 
                        ($(aa)!0.5!(aababaababaa)$) node (aababaa) {} 
                        ($(abaa)!0.5!(abaababaababaa)$) node (abaababaa) {} 
                 ;
                \path[every node/.style={fill=white,sloped,anchor=center,allow upside down,font=\scriptsize,inner sep=0pt}]
                   (a) edge node {a} (root)
                   (a) edge node {a} (aa)
                   (aa) edge node {babaa} (aababaa)
                   (aababaa) edge node {babaa}  (aababaababaa)
                   (a) edge node {ba} (aba)
                   (aba) edge node {a} (abaa)
                   (abaa) edge node {babaa} (abaababaa)
                   (abaababaa) edge node {babaa} (abaababaababaa)
                   (aba) edge node {baa} (ababaa)
                   (ababaa) edge node {babaa}  (ababaababaa)
                   (root) edge node {ba} (ba)
                   (ba) edge node {a} (baa)
                   (baa) edge node {babaa} (baababaa)
                   (baababaa) edge node {babaa} (baababaababaa)
                   (ba) edge node {baa} (babaa)
                   (babaa) edge node {babaa} (babaababaa)
                  ;
                  %
		  \begin{scope}[opacity=1]
                  	\draw[->,black, dotted, thick] (abaababaababaa)--(baababaababaa);
                   	\draw[->,black, dotted, thick] (baababaababaa)--(aababaababaa);
                   	\draw[->,black, dotted, thick] (aababaababaa)--(ababaababaa);
                   	\draw[->,black, dotted, thick] (ababaababaa)--(babaababaa);                      
                   	\draw[->,black, dotted, thick] (babaababaa)--(abaababaa);
                   	\draw[->,black, dotted, thick] (abaababaa)--(baababaa);
                   	\draw[->,black, dotted, thick] (baababaa)--(aababaa);
                   	\draw[->,black, dotted, thick] (aababaa)--(ababaa);
                   	\draw[->,black, dotted, thick] (ababaa)--(babaa);                   	                   	
                   	\draw[->,black, dotted, thick] (babaa)--(abaa);
                   	\draw[->,black, dotted, thick] (abaa)--(baa);
                   	\draw[->,black, dotted, thick] (baa)--(aa);                    	                   	
                   	\draw[->,black, dotted, thick] (aa)  to[out=200,in=220]  (a);
                   	\draw[->,black, dotted, thick,bend left] (a)  edge  (root); 
		\end{scope}
                \end{tikzpicture}
          }
          \subfloat[]{
          	\label{fig:suffix_chain_b}
 \begin{tikzpicture}[scale=0.8]
 \path[use as bounding box] (-4,0) rectangle (4,-5);
                \draw[every node/.style={draw, circle,minimum size=4pt,inner sep=0pt}] (0,0) node (root) {}
                      (root) +(-160:2cm) node (a) {}
                              (a) +(-110:3.6cm) node[fill] (aabab) {} 
                              (a) +(-70:1cm) node (aba) {}
                                      (aba) +(-110:2cm) node[fill] (abaabab) {} 
                                      (aba) +(-70:2cm) node[fill] (abab) {} 
                      (root) +(-20:2cm) node (ba) {}
                              (ba) +(-110:2cm) node[fill] (baabab) {} 
                              (ba) +(-70:2cm) node[fill] (bab) {} 
                 ;
                 \draw[every node/.style={draw, circle,fill=black,minimum size=2pt,inner sep=0pt}]
                        ($(root)!0.5!(ba)$) node (b) {} 
                        ($(a)!0.5!(aba)$) node (ab) {} 
                        
                 ;
                \path[every node/.style={fill=white,sloped,anchor=center,allow upside down,font=\scriptsize,inner sep=0pt}]
                   (a) edge node {a} (root)
                      (a) edge node {abab} (aabab)
                      (a) edge node {b} (ab)
                      (ab) edge node {a} (aba)
                          (aba) edge node {abab} (abaabab)
                          (aba) edge node {b} (abab)
                   (root) edge node {b} (b) (b) edge node {a} (ba)
                      (ba) edge node {abab} (baabab)
                      (ba) edge node {b} (bab)
                  ;
%
		  \begin{scope}[opacity=1]
                 \draw[->,black, dotted, thick] (abaabab)--(baabab);
                 \draw[->,black, dotted, thick] (baabab)--(aabab);
                 \draw[->,black, dotted, thick] (aabab)--(abab);
                 \draw[->,black, dotted, thick] (abab)--(bab);
                 \draw[->,black, dotted, thick] (bab)--(ab);
                 \draw[->,black, dotted, thick] (ab)--(b);
                 \draw[->,black, dotted, thick] (b)  to[out=100,in=30]  (root);
		 \end{scope}
                \end{tikzpicture}
          }

\caption{(a) The suffix tree $\tau$ of a string $S = abaababaababaa$ with suffix nodes and the suffix chain. (b) The suffix tree of a prefix $S' = abaabab$ of $S$. Suffix links of internal nodes are not shown, but they are the same in both trees.}

\end{figure}

\noindent The string $S$ is fully specified by the order in which the suffix chain visits the subtrees hanging off the root. More precisely,

\begin{observation}\label{obs:suffix_chain}
If $y_0 \suf y_1 \suf \ldots \suf y_l = \mbox{root}$ is the suffix chain in the suffix tree of a string $S$, then $|S|=l$ and $S[i] = a_i$, where $a_i$ is the first letter on the edge going from the root to the subtree containing $y_{i-1}$, $i=1,\ldots,l$.
\end{observation}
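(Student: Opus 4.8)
The plan is to pin down the label of every node on the suffix chain and then simply read off $S$. First I would show, by induction on $i$, that $y_i$ is the (unique, possibly implicit) node of the suffix tree whose label is the suffix $S[i+1 \ldots |S|]$ of $S$. The base case holds because, as recalled above, the suffix chain starts at the leaf labelled by $S = S[1 \ldots |S|]$. For the inductive step note that, since the root is $y_l$ and the nodes of the chain are distinct, $y_{i-1}$ is not the root whenever $i \le l$; hence its label $S[i \ldots |S|] = a_1 a_2 \cdots a_m$ is nonempty, and by the definition of suffix links (extended to implicit nodes) the suffix-link target $y_i$ of $y_{i-1}$ is the node labelled $a_2 \cdots a_m = S[i+1 \ldots |S|]$, which is again a suffix of $S$ and therefore does correspond to a node of the tree.

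From this description the label length of $y_i$ is exactly $|S| - i$, so the labels shrink by one at every step of the chain. Since the chain ends at $y_l = \text{root}$, whose label has length $0$, we get $|S| - l = 0$, i.e.\ $l = |S|$. For the second claim, fix $i \in \{1, \ldots, l\}$. Then $y_{i-1}$ has the nonempty label $S[i \ldots |S|]$, so $y_{i-1}$ is not the root and lies in one of the subtrees hanging off the root. Every node on the path from the root to $y_{i-1}$, except the root itself, has a label that is a nonempty prefix of $S[i \ldots |S|]$ and therefore starts with the letter $S[i]$; applying this to the child of the root on that path shows that the edge from the root into the subtree containing $y_{i-1}$ has first letter $S[i]$. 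Hence $a_i = S[i]$, as claimed.

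I do not expect a genuine obstacle here: the statement is a direct bookkeeping consequence of the definition of suffix links, and the suffix-chain structure it refers to has already been established. The only points to be slightly careful about are that the intermediate nodes $y_i$ may be implicit (which is accommodated by the extended notion of suffix link used throughout the paper) and the degenerate case $|S| = 0$, where the suffix tree is a single node, $l = 0$, and there is nothing to prove.
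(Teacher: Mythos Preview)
Your argument is correct and complete. Note, however, that the paper does not actually give a proof of this statement: it is recorded as an \emph{Observation} and left without justification, since it follows directly from the definitions of the suffix chain and of suffix links (which, as you say, you are simply unwinding). Your write-up makes explicit exactly the bookkeeping the paper takes for granted, namely that the label of $y_i$ is $S[i+1\ldots |S|]$, so the chain has length $|S|$ and the first letter below the root on the path to $y_{i-1}$ is $S[i]$.
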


We define the parent $par (x)$ of a node $x$ to be the deepest explicit node on the path from the root to $x$ (excluding $x$). The distance between a node and one of its ancestors is defined to be the difference between the string depths of these nodes.

\begin{lemma}\label{lm:pardist}
If $x_1 \suf x_2$ is a suffix link, then the distance from $x_1$ to $par (x_1)$ cannot be less than the distance from $x_2$ to $par (x_2)$.
\end{lemma}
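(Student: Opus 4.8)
The plan is to argue entirely with string depths. Write $d(\cdot)$ for string depth and let $a_1 a_2 \cdots a_m$ be the label of $x_1$, so $m = d(x_1)$ and, by definition of the suffix link, $x_2$ has label $a_2 \cdots a_m$ with $d(x_2)=m-1$. (We may assume $x_2$ is not the root, since otherwise $par(x_2)$ is undefined; thus $m\ge 2$.) Put $k=d(par(x_1))$. Since $par(x_1)$ is the deepest explicit node strictly above $x_1$, its label is $a_1\cdots a_k$ and $0\le k<m$, so the distance from $x_1$ to $par(x_1)$ is $m-k$ and the distance from $x_2$ to $par(x_2)$ is $(m-1)-d(par(x_2))$. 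Hence the claimed inequality is equivalent to $d(par(x_2))\ge k-1$.

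First I would note that $par(x_1)$ is never a leaf (it has $x_1$ as a descendant), so it is either the root (if $k=0$) or a branching internal node (if $k\ge 1$). In the latter case its suffix link is defined and, by the definition of suffix links in a suffix tree, points to an explicit node $w$ with label $a_2\cdots a_k$, i.e.\ with $d(w)=k-1$; in the former case take $w$ to be the root, which likewise satisfies $d(w)=\max(k-1,0)=0$. The key step is to check that in both cases $w$ is a strict ancestor of $x_2$: its label $a_2\cdots a_k$ is a prefix of the label $a_2\cdots a_m$ of $x_2$, and $d(w)=k-1<m-1=d(x_2)$, so $w$ lies strictly above $x_2$ on the root-to-$x_2$ path. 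Since $par(x_2)$ is the deepest explicit node strictly above $x_2$, this gives $d(par(x_2))\ge d(w)\ge k-1$, and unwinding depths yields $\mathrm{dist}(x_2,par(x_2))=(m-1)-d(par(x_2))\le m-k=\mathrm{dist}(x_1,par(x_1))$, as required.

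The only real subtlety — and the step I expect to be the crux — is the fact that the suffix link of $par(x_1)$ leads to an \emph{explicit} node, equivalently that ``being a branching node'' is preserved when passing to a suffix: if $a_1\cdots a_k$ is the label of a branching node, then so is $a_2\cdots a_k$, because any two distinct right-extensions of $a_1\cdots a_k$ occurring in $S$ are also distinct right-extensions of $a_2\cdots a_k$ occurring in $S$. This is standard and is in fact already implicit in the definition of suffix links quoted earlier, so I would simply invoke it. Everything else is bookkeeping with string depths together with the harmless edge cases $k\in\{0,1\}$ (absorbed by letting $w$ be the root) and $x_2=\text{root}$ (excluded, as $par(x_2)$ is then undefined). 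Note the argument makes no use of whether $x_1$ is explicit, implicit, or a leaf, which is what we want since suffix links are extended to all such nodes.
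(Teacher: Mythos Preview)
Your proof is correct and follows essentially the same approach as the paper's: both argue that the suffix link of $par(x_1)$ lands on an explicit ancestor of $x_2$ at the same distance above $x_2$ as $par(x_1)$ is above $x_1$, forcing $par(x_2)$ to be at least that deep. The paper compresses this into a single sentence, whereas you spell out the string-depth arithmetic and the edge cases explicitly, but the underlying idea is identical.
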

\begin{proof}
If $d$ is the distance between $x_1$ and $par(x_1)$, then the suffix link of $par(x_1)$ points to an explicit ancestor $d$ characters above $x_2$.
\end{proof}

\begin{lemma}
\label{lm:LeafEdges}
Let $x$ be an implicit suffix node. The distance between $x$ and $par (x)$ is not bigger than the length of any leaf edge.
\end{lemma}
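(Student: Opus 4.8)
The plan is to read the inequality off the suffix chain, using only the two lemmas already established: that the distance to the parent is monotone along the chain (\autoref{lm:pardist}), and that on the chain all leaves precede all implicit suffix nodes (\autoref{lm:suffix_chain}).

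First I would fix a string $S$ realizing the suffix tree and recall from \autoref{obs:suffix_chain} that its suffix chain is a path $y_0 \suf y_1 \suf \cdots \suf y_m = \mbox{root}$, where $m = |S|$ and $y_i$ is the suffix node of string depth $m-i$. Write $d_i$ for the distance from $y_i$ to $par(y_i)$. Applying \autoref{lm:pardist} to each individual link $y_i \suf y_{i+1}$ gives $d_0 \ge d_1 \ge \cdots \ge d_m$, i.e.\ the distance to the parent is non-increasing as one walks down the suffix chain.

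Next I would locate the relevant nodes on this chain. By \autoref{lm:suffix_chain} the leaves occupy the initial segment $y_0,\dots,y_{p-1}$ (segment (1), with $p$ the number of leaves), whereas every implicit suffix node lies in segments (2)–(3), hence strictly later; in particular $x = y_j$ for some $j \ge p$. Now let $\ell$ be an arbitrary leaf, say $\ell = y_i$ with $i \le p-1 < j$. Since $\ell$ is a leaf, $par(\ell) = par(y_i)$ is the upper endpoint of the leaf edge ending at $\ell$, so $d_i$ is exactly the length of that leaf edge; and $d_j$ is exactly the distance from $x$ to $par(x)$. Monotonicity then yields $d_i \ge d_j$, which is the claimed bound, and since $\ell$ was arbitrary the distance from $x$ to $par(x)$ is at most the length of every leaf edge.

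I do not expect a genuine obstacle here once \autoref{lm:pardist} and \autoref{lm:suffix_chain} are available; the only points deserving a sentence of care are (i) that $x$, being \emph{implicit}, really does sit strictly after every leaf on the suffix chain — immediate from the consecutive-segment description, as implicit suffix nodes are confined to segments (2)–(3) while leaves form segment (1) — and (ii) the elementary observation that for a leaf the distance to its parent coincides with the length of its leaf edge, so that the conclusion of the monotonicity chain is verbatim the statement to be proved.
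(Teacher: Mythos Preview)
Your proposal is correct and follows essentially the same approach as the paper: apply \autoref{lm:pardist} along the suffix chain to obtain non-increasing distances to the parent, then invoke \autoref{lm:suffix_chain} to place every leaf before every implicit suffix node, so the distance at an implicit node is bounded by the distance at any leaf, which is the length of its leaf edge. The paper's proof is just a two-sentence compression of exactly this argument.
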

\begin{proof}
It follows from \autoref{lm:pardist} that as the suffix chain $y_0 \suf y_1 \suf \ldots \suf y_l = \mbox{root}$ is traversed, the distance from each node to its parent is non-increasing. Since the leaves are visited first, the distance between any implicit suffix node and its parent cannot exceed the length of a leaf edge.
\end{proof}

\begin{lemma}
\label{lm:good_implicit_ST}
If $\tau$ is a suffix tree, then it can be realized by some string such that 
\begin{itemize}
	\item[(1)] The minimal length of a leaf edge of $\tau$ will be equal to one;
	\item[(2)] Any edge of $\tau$ will contain at most one implicit suffix node at the distance one from its upper end.
\end{itemize}
\end{lemma}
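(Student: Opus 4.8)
The plan is to start from an arbitrary string $S$ realizing $\tau$ and to successively shorten the labels of the edges that are "too long" until properties (1) and (2) hold, without changing the topology of the tree. The key structural fact I would exploit is \autoref{lm:LeafEdges} together with \autoref{obs:suffix_chain}: the suffix chain $y_0 \suf y_1 \suf \cdots \suf y_l = \text{root}$ determines $S$ completely, and as it is traversed the distance from each node to its parent is non-increasing. In particular, if $S$ has a leaf edge of length $\ell \geq 2$, then \emph{every} implicit suffix node sits at distance $\leq \ell$ from its parent, and every internal edge has length $\geq \ell$ as well for the implicit suffix nodes on it to have room; more importantly, there are at least $\ell$ leaves ending on leaf edges (positions $y_0,\dots$ of the chain corresponding to segment (1) of \autoref{lm:suffix_chain}).

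First I would establish the following reduction: if the minimal leaf-edge length is $\ell \geq 2$, I can produce a shorter string $S'$, still realizing $\tau$, whose minimal leaf-edge length is $\ell - 1$. The natural candidate is to delete one character from $S$ — specifically, since the $\ell$ deepest positions of leaves all have the same "tail" below the branching structure, removing the last character of $S$ (equivalently, taking the appropriate prefix) leaves every branching node and every leaf in place but shortens every leaf edge by one. Concretely, I would argue that if the shortest leaf edge has length $\ell \geq 2$, then $S$ and $S$ with its last symbol deleted have isomorphic suffix trees: no branching node is created or destroyed, because a new branching node would require two suffixes agreeing on a strictly longer prefix than before, and no branching node is lost because the shortest leaf edge still has length $\ell-1 \geq 1$. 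The suffix links are preserved since they depend only on the labels of explicit nodes, which are unchanged. Iterating, I reach a realizing string with minimal leaf-edge length exactly one, giving (1). Figure~\ref{fig:suffix_chain_b} already illustrates exactly this kind of prefix-shortening.

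For (2), once the minimal leaf-edge length is one, \autoref{lm:LeafEdges} immediately gives that every implicit suffix node lies at distance exactly one from its parent (distance at least one is automatic, and at most one by the lemma). Thus every edge carrying an implicit suffix node carries it at distance one from its upper end. It remains to rule out \emph{two} implicit suffix nodes on the same edge: but two distinct implicit suffix nodes on one edge would be at distinct distances from the parent, and the larger distance would exceed one, contradicting \autoref{lm:LeafEdges}. (Alternatively one may cite the single-implicit-node-per-edge result of~\cite{ImplicitSuffixNodes}, but the distance argument suffices here.) So after the shortening, both (1) and (2) hold simultaneously, and the same string witnesses both.

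The main obstacle I anticipate is the first reduction step — proving rigorously that deleting the last symbol of $S$ does not alter the topology when the shortest leaf edge has length $\geq 2$. The delicate point is that deleting a symbol changes the set of suffixes and could in principle merge or split branching nodes or change which internal nodes are suffix nodes; I would handle this by characterizing branching nodes as labels $w$ such that $wa$ and $wb$ both occur in $S$ for distinct letters $a,b$, and showing this set is invariant under the deletion precisely because the "spare" length $\ell-1$ on every leaf edge absorbs the change. Care is also needed to confirm the suffix links of internal nodes are literally the same pointers, not merely that such links exist — but since internal node labels are unchanged, this is direct.
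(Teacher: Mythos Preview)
Your approach is correct and essentially the same as the paper's: take a realizing string $S$, pass to a prefix so that the minimal leaf-edge length drops to one, then invoke \autoref{lm:LeafEdges} for part~(2). The only cosmetic difference is that the paper deletes the last $m-1$ characters of $S$ in a single step (observing that the suffix tree of this prefix is exactly $\tau$ ``trimmed at height $m-1$''), whereas you peel off one character at a time and iterate; the paper's one-shot formulation sidesteps the induction and makes the topological invariance easier to state, since trimming all leaf edges by $m-1$ manifestly preserves the branching structure when every leaf edge has length at least $m$.
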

\begin{proof}
Let $S$ be a string realizing $\tau$, and $m$ be the minimal length of a leaf edge of $\tau$. Consider a prefix $S'$ of $S$ obtained by deleting its last $(m-1)$ letters. Its suffix tree is exactly $\tau$ trimmed at height~$m-1$. (See~\autoref{fig:suffix_chain_b}.) The minimal length of a leaf edge of this tree is one. Applying~\autoref{lm:LeafEdges}, we obtain that the distance between any implicit suffix node $x$ of this tree and $par (x)$ is one, and, consequently, any edge contains at most one implicit suffix node.
\end{proof}

\begin{lemma}\label{lm:longstrings}
If $\tau$ is realized by a string of length $l$, then it is also realized by strings of length $l+1,l+2,l+3,$ and so on.
\end{lemma}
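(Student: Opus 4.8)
The plan is to show that whenever a string $S$ realizes $\tau$, some one-letter extension $Sc$ of $S$ realizes $\tau$ as well; the lemma then follows by iterating this, starting from the given length-$l$ witness and appending one letter at a time.

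To choose $c$, let $u$ be the longest suffix of $S$ occurring at least twice in $S$ (the empty string, if no nonempty suffix of $S$ repeats). Besides its occurrence as a suffix, $u$ has an occurrence ending at some position $p<|S|$, and I set $c:=S[p+1]$, so that $uc$ — and hence $vc$ for every suffix $v$ of $u$ — is a substring of $S$. I then claim that $S'=Sc$ realizes $\tau$, which I would establish by comparing the suffix trees of $S$ and of $S'$ directly, using the facts that the leaves of a suffix tree correspond to the suffixes that occur exactly once, and the branching internal nodes to the substrings $w$ for which $wa$ and $wb$ are both substrings for two distinct letters $a,b$. By maximality of $u$, every suffix of $S$ longer than $u$ occurs exactly once, so in $S'$ it persists as a leaf whose edge has merely been prolonged by the single letter $c$; every suffix $v$ of $S$ of length at most $|u|$ has $vc$ already occurring inside $S$, so extending $v$ by $c$ stays within the existing tree and creates no branching; and the brand-new suffix $c$ of $S'$ also already occurs inside $S$, so it does not spawn a leaf either. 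Hence appending $c$ creates and destroys no explicit node and leaves the suffix links of internal nodes untouched, so the suffix tree of $S'$ is isomorphic to that of $S$, i.e. to $\tau$.

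The step I expect to be most delicate is exactly this last comparison — specifically, verifying that no new branching node or leaf appears when one does not want to cite Ukkonen's online construction as a black box (there, this is precisely the situation of a phase terminating at once by the ``letter already present'' rule). The subtlety is that $c$ may coincide with the last letter of $S$, so that certain suffixes acquire an extra occurrence straddling the boundary between $S$ and the appended letter; a short case analysis, split according to whether the suffix under consideration is longer than $u$ and whether it ends in $c$, shows that the set of once-occurring suffixes and the set of doubly-right-extendable substrings are nonetheless unchanged, and that the parent relation is preserved. With the claim in hand, iterating it from the length-$l$ witness yields realizing strings of lengths $l+1,l+2,l+3,\dots$, as required.
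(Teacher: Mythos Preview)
Your proposal is correct and is essentially the paper's own argument in different clothing: your $u$ is precisely the label of the first non-leaf node $y_j$ in the suffix chain, and your letter $c$ is an element of what the paper calls $letters(y_j)$, the set of letters already present immediately below $y_j$. The paper's verification that $Sa$ realizes $\tau$ is the terse ``It follows'' (relying on the monotonicity $letters(y_{i-1})\subseteq letters(y_i)$), whereas you spell out the occurrence-counting case analysis; both amount to the same observation that appending a letter already present below every non-leaf suffix node extends each leaf edge by one and moves each non-leaf suffix node one step along an existing edge, creating no new explicit node.
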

\begin{proof}
Let $y_0 \suf y_1 \suf \ldots \suf y_{l} = \mbox{root}$ be the suffix chain for a string $S$ that realizes $\tau$. Moreover let $letters(y_i)$ be the set of first letters immediately below node $y_i$. Then $letters(y_{i-1}) \subseteq letters(y_i)$, $i=1,\ldots,l$. Let $y_j$ be the first non-leaf node in the suffix chain (possibly the root). It follows that $Sa$ also realizes~$\tau$, where $a$ is any letter in $letters(y_j)$.
\end{proof}

\noindent We now prove \autoref{thm:numberofsuffixnodes} by showing that if $\tau$ is a suffix tree then a string of length $n-1$ realizes it. By \autoref{lm:good_implicit_ST}, $\tau$ can be realized by a string $S'$ so that the minimal length of a leaf edge is $1$. Consider the last leaf~$\ell$ visited by the suffix chain in the suffix tree of $S'$. By the property of $S'$ the length of the edge $(par(\ell) \rightarrow \ell)$ is 1. Remember that a suffix link of an internal node always points to an internal node and that suffix links cannot form cycles. Moreover, upon transition by a suffix link the string depth decreases exactly by one. Hence if $\tau$ has $I$ internal nodes then the string depth of the parent of $\ell$ is at most $I-1$ and the string depth of $\ell$ is at most $I$. Consequently, if $L$ is the number of leaves in $\tau$, the length of the suffix chain and thus the length of $S'$ is at most $L+I-1 = n-1$, so by \autoref{lm:longstrings} there is a string of this length that realizes $\tau$.

\section{The Suffix Tour Graph}\label{sec:suffix_tour_graph}
In their work~\cite{InferringStrings} I et al. introduced a notion of \emph{suffix tour graphs}. They showed that suffix tour graphs of \$-suffix trees must have a nice structure which ties together the suffix links of the internal explicit nodes, the first letters on edges, and the order of leaves of $\tau$~--- i.e., which leaf corresponds to the longest suffix, which leaf corresponds to the second longest suffix, and so on. Knowing this order and the first letters on edges outgoing from the root, it is easy to infer a string realizing $\tau$. We study the structure of suffix tour graphs of suffix trees. We show a connection between suffix tour graphs of suffix trees and \$-suffix trees and use it to solve the suffix tree decision problem.

Let us first formalize the input to the problem. Consider a tree $\tau = (V,E)$ annotated with a set of suffix links $\suflink: V \rightarrow V$ between internal explicit nodes, and the first letter on each edge, given by a labelling function $\fl : E \rightarrow \Sigma$ for some alphabet $\Sigma$. For ease of description, we will always augment $\tau$ with an auxiliary node $\perp$, the parent of the root. We add the suffix link $(\mbox{root } \suf\perp)$ to $\suflink$ and label the edge $(\perp \rightarrow \mbox{root})$ with a symbol ''?'', which matches any letter of the alphabet. 

To construct the suffix tour graph of $\tau$, we first compute values $\ell(x)$ and $d(x)$ for every explicit node $x$ in $\tau$. The value $\ell(x)$ is equal to the number of leaves $y$ where $par(y) \suf par(x)$ is a suffix link in $\suflink$, and $\fl (par (y) \rightarrow y) = \fl (par (x) \rightarrow x)$. See \autoref{fig:l(x)} for an example. Let $L_x$ and $V_x$ be the sets of leaves and nodes, respectively, of the subtree of $\tau$ rooted at a node~$x$. Note that $L_x$ is a subset of $V_x$. We define $d(x) = |L_x| - \sum_{\substack{y \in V_x}} \ell (y)$. See \autoref{fig:d(x)} for an example.

\begin{figure}
\centering
\subfloat[\label{fig:l(x)}]{
\begin{tikzpicture}[scale=1.5]
\path[use as bounding box] (-0.25,3.3) rectangle (2.75,0.5);
	\path[every node/.style={fill=white,sloped,anchor=center,allow upside down,font=\scriptsize,inner sep=0pt}] 
	(0.8,1) edge node {$\ldots$a} (1,2)
	(1.8,1.3) edge node {$\ldots$a} (2,2.3)
	;
	\draw[fill=black] (0.8,1) circle (0.3ex);
	\node[below] at (0.8,1) {$y$};
	\draw[fill=white] (1,2) circle (0.3ex);
%
	\draw[black,opacity=1, ->, shorten >= 0.6ex, shorten <= 0.6ex,dotted,thick] (1,2) -- (2,2.3); 
%
        \draw (1.8,1.3) -- (1.6,0.8) -- (2,0.8) -- cycle;
	\draw[fill=white] (1.8,1.3) circle (0.3ex);
	\node[right] at (1.8,1.3) {$x$};

	\draw[fill=white] (2,2.3) circle (0.3ex);
%
%
%
\end{tikzpicture}
}\qquad\qquad\qquad
\subfloat[\label{fig:d(x)}]{
\begin{tikzpicture}[scale=1.5]
\path[use as bounding box] (0.5,1) rectangle (3.5,-1.8);
                \draw[every label/.style={draw=none,outer sep=2pt,rectangle,font=\scriptsize}, every node/.style={draw, circle,minimum size=4pt,inner sep=0pt}] (2,0) node[label={[label distance=3pt] right:{$(2,0)$}}] (root) {}
                      (root) +(90:0.7cm) node[label=right:$\perp$] (perp) {}
                      (root) +(-30:1cm) node[fill,label=right:{$(1,0)$}] (b) {}
                      (root) +(-150:1cm) node[fill,label=left:{$(0,1)$}] (leaf) {}
					  (root)+(-90:1cm) node[label=right:{$(1,1)$}] (a) {}node[label={[font=\normalsize]below:{$x$}}] () {}
					  (a)+(-40:1cm) node[fill,label=right:{$(0,1)$}] (ab) {}
					  (a)+(-140:1cm) node[fill,label=left:{$(0,1)$}] (aa) {} node[fill,label={[font=\normalsize] below:{$y$}}] () {}
                 ;
                \path[every node/.style={fill=white,sloped,anchor=center,allow upside down,font=\scriptsize,inner sep=0pt}]
                   (root) edge node[sloped=false, inner sep=1pt] {?} (perp)
                   (a) edge node {$\ldots$a} (root)
                   (leaf) edge node {$\ldots\$$} (root)
                   (root) edge node {b$\ldots$} (b)
                   (a) edge node {b$\ldots$} (ab)
				   (aa) edge node {$\ldots$a} (a)
                  ;
                  \draw[->,thick, dotted, bend left] (a) edge (root);
                  \draw[->,thick, dotted, bend left] (root) edge (perp);
\end{tikzpicture}
}
\caption{(a) An example of a node $x$ with $\ell(x) = 1$. The leaf $y$ contributes to $\ell(x)$ since $\fl (par (y) \rightarrow y) = \fl (par (x) \rightarrow x) = a$. (b) An input consisting of a tree, suffix links and the first letter on each edge. The tree has been extended with the node $\perp$, and each node is assigned values $(\ell(x), d(x))$. For the node $x$, $\ell(x) = 1$, $|L_x| = 2$, and hence $d(x) = 2 - (1 + 0 + 0) = 1$. \label{fig:succpointers}}
\end{figure}
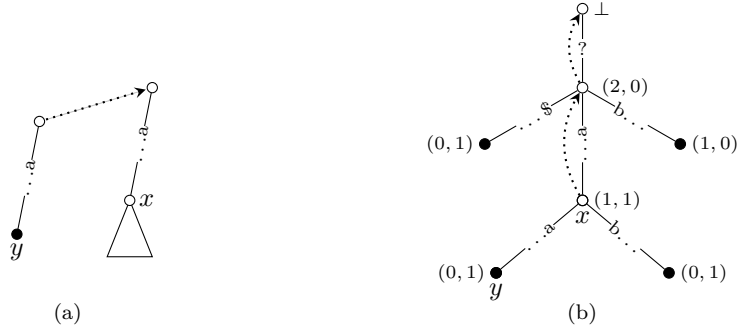

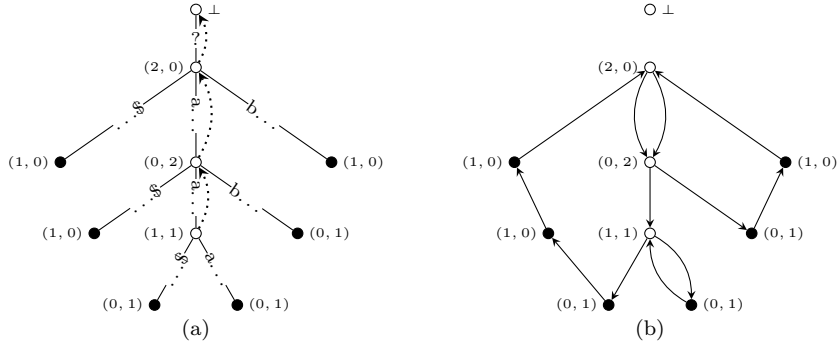
\begin{figure}
\centering
\subfloat[]{
\begin{tikzpicture}[scale=1.1]
\path[use as bounding box] (-2,0.5) rectangle (2,-2.9);
                \draw[every label/.style={draw=none,outer sep=2pt,rectangle,font=\tiny}, every node/.style={draw, circle,minimum size=4pt,inner sep=0pt}] (0,0) node[label=left:{$(2,0)$}] (root) {}
                      (root) +(90:0.7cm) node[label=right:$\perp$] (perp) {}
                      (root) +(-145:2cm) node[fill,label=left:{$(1,0)$}] (0) {} 
                      (root) +(-35:2cm) node[fill,label=right:{$(1,0)$}] (baaa0) {} 
                      ($(0)!0.5!(baaa0)$) node[label=left:{$(0,2)$}] (a) {}
                              (a) +(-145:1.5cm) node[fill,label=left:{$(1,0)$}] (a0) {} 
                              (a) +(-35:1.5cm) node[fill,label=right:{$(0,1)$}] (abaaa0) {} 
                              ($(a0)!0.5!(abaaa0)$) node[label=left:{$(1,1)$}] (aa) {}
                                      (aa) +(-120:1cm) node[fill,label=left:{$(0,1)$}] (aa0) {} 
                                      (aa) +(-60:1cm) node[fill,label=right:{$(0,1)$}] (aaa0) {} 
                 ;
                \path[every node/.style={fill=white,sloped,anchor=center,allow upside down,font=\scriptsize,inner sep=0pt}]
                   (root) edge node[sloped=false, inner sep=1pt] {?} (perp)
                   (0) edge node {$\ldots$\$} (root)
                   (root) edge node {a$\ldots$} (a)
                   (root) edge node {b$\ldots$} (baaa0)
                   (a0) edge node {$\ldots$\$} (a)
                   (a) edge node {a$\ldots$} (aa)
                   (a) edge node {b$\ldots$} (abaaa0)
                   (aa0) edge node {$\ldots$\$} (aa)
                   (aa) edge node {a$\ldots$} (aaa0)
                  ;
                  \draw[->,black,thick,dotted, bend right]
                  (aa) edge (a)
                  (a) edge (root)
                  (root) edge (perp)
                  ;
\end{tikzpicture}
}
\qquad\qquad
\subfloat[]{
     \begin{tikzpicture}[scale=1.1]
 \path[use as bounding box] (-2,0.5) rectangle (2,-2.9);
                \draw[every label/.style={draw=none,outer sep=2pt,rectangle,font=\tiny}, every node/.style={draw, circle,minimum size=4pt,inner sep=0pt}] (0,0) node[label=left:{$(2,0)$}] (root) {}
                      (root) +(90:0.7cm) node[label=right:$\perp$] (perp) {}
                      (root) +(-145:2cm) node[fill,label=left:{$(1,0)$}] (0) {} 
                      (root) +(-35:2cm) node[fill,label=right:{$(1,0)$}] (baaa0) {} 
                      ($(0)!0.5!(baaa0)$) node[label=left:{$(0,2)$}] (a) {}
                              (a) +(-145:1.5cm) node[fill,label=left:{$(1,0)$}] (a0) {} 
                              (a) +(-35:1.5cm) node[fill,label=right:{$(0,1)$}] (abaaa0) {} 
                              ($(a0)!0.5!(abaaa0)$) node[label=left:{$(1,1)$}] (aa) {}
                                      (aa) +(-120:1cm) node[fill,label=left:{$(0,1)$}] (aa0) {} 
                                      (aa) +(-60:1cm) node[fill,label=right:{$(0,1)$}] (aaa0) {} 
                 ;
%
                  \draw[->]
                  (root) edge[bend left] (a)
                  (a) edge (abaaa0)
                  (abaaa0) edge (baaa0)
                  (baaa0) edge (root)
                  (root) edge[bend right] (a)
                  (a) edge (aa)
                  (aa) edge[bend left] (aaa0)
                  (aaa0) edge[bend left] (aa)
                  (aa) edge (aa0)
                  (aa0) edge (a0)
                  (a0) edge (0)
                  (0) edge (root)
                  ;
                \end{tikzpicture}
}

\caption{(a) An input consisting of a tree, suffix links and the first letter on each edge. The input has been extended with the auxiliary node $\perp$, and each node is assigned values $(\ell(x), d(x))$.  (b) The corresponding suffix tour graph. The input (a) is realized by the string \texttt{abaaa\$}, which corresponds to an Euler tour of (b).\label{fig:stgexample}}
\end{figure}

\begin{definition}
\label{def:suffix_tour_graph}
The \emph{suffix tour graph} of a tree $\tau = (V, E)$ is a directed graph $G = (V, E_G)$, where
$E_G =  \{(y \rightarrow x)^k \; | \; (y \rightarrow x) \in E, k = d(x)\} \cup \{(y \rightarrow x) \; | \; y \mbox{ is a leaf contributing to } \ell(x)\}$. Here $(y \rightarrow x)^k$ means the edge $y \rightarrow x$ with multiplicity $k$. If $k = d(x) <0$, we define $(y \rightarrow x)^k$ to be $(x \rightarrow y)^{|k|}$.
\end{definition}
\noindent To provide some intuition of this definition, first recall that the suffix links of the leaves of $\tau$ are not part of this input. In fact the problem of deciding whether $\tau$ is a suffix tree reduces to inferring the suffix links of the leaves of $\tau$, since by knowing these we can reconstruct the suffix chain, and thus also a string realizing $\tau$. The purpose of the suffix tour graph is to encode the constraints that the known suffix links of the internal nodes impose on the unknown suffix links of the leaves as follows: Each leaf $y$ has an outgoing edge that points to the subtree that must contain the leaf immediately after $y$ in the suffix chain. This subtree is uniquely defined by the suffix link of $par(y)$ and the first letter on the edge between $par(y)$ and $y$. For example the outgoing edge for the leaf $y$ in \autoref{fig:l(x)} would point to the subtree rooted in $x$. The value $\ell(x)$ is simply the number of leaves that points to $x$. It can happen that the outgoing edge of $y$ points to another leaf, in which case we then know the successor suffix of $y$ with certainty. The remaining edges in the suffix tour graph are introduced to make the graph Eulerian. The subtree rooted in a node $x$ will have $|L_x|$ outgoing pointers, and $\sum_{y \in V_x} \ell(y)$ incoming pointers, and hence we create $d(x) = |L_x| - \sum_{y \in V_x} \ell(y)$ edges from $par(x)$ to $x$. The main idea, which we will elaborate on in the next section, is that if the graph is Eulerian (and connected), we can reconstruct the suffix chain on the leaves by finding an Eulerian cycle through the leaves of the suffix tour graph. See \autoref{fig:stgexample} for an example of the suffix tour graph.

\begin{lemma}[\cite{InferringStrings}]\label{lm:eulerian}
The suffix tour graph $G$ of a suffix tree $\tau$ is an Eulerian graph (possibly disconnected).
\end{lemma}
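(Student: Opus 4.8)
The plan is to show that $G$ is \emph{balanced}, i.e., every node has in-degree equal to out-degree; since Eulerianity (allowing disconnection) is exactly the balanced condition, this suffices. I would work with a string $S$ realizing $\tau$ that has the nice form guaranteed by \autoref{lm:good_implicit_ST}: the minimal leaf-edge length is one, and every edge contains at most one implicit suffix node, located at distance one from its upper endpoint. Fix such an $S$ and consider its suffix chain $y_0 \suf y_1 \suf \ldots \suf y_l = \mbox{root}$. By \autoref{lm:suffix_chain} this chain splits into the four segments (leaves; implicit suffix nodes on leaf edges; implicit suffix nodes on internal edges; explicit internal suffix nodes), and by the chosen form of $S$ each implicit suffix node sits exactly one character below an explicit node. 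The key translation step is to read off, for each explicit node $x$, what $\ell(x)$ and $d(x)$ count in terms of this chain: $\ell(x)$ is the number of leaves $y_i$ in the chain whose successor $y_{i+1}$ lies (one character) below $x$ — equivalently, the number of chain-transitions from a leaf into the subtree $V_x$ landing on the edge $(par(x)\to x)$ — while $d(x) = |L_x| - \sum_{y\in V_x}\ell(y)$ measures the net "flow deficit" of the subtree $V_x$.

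Next I would compute the in- and out-degree of each node in $G$ directly from \autoref{def:suffix_tour_graph}. A leaf $y$ has exactly one outgoing leaf-contribution edge (to the node $x$ for which $y$ contributes to $\ell(x)$) and in-degree $d(y)$ from its parent edge; one checks $d(y)=1$ for every leaf in our normalized string (its leaf edge has length $\ge 1$ and the chain enters it), so leaves are balanced. For an internal node $x$, the incoming edges are: the $d(x)$ copies of $(par(x)\to x)$, plus the leaf-contribution edges counted by $\ell(x)$; the outgoing edges are the $d(c)$ copies of $(x\to c)$ over all children $c$ of $x$. So balance at $x$ reads
\[
d(x) + \ell(x) \;=\; \sum_{c \text{ child of } x} d(c).
\]
Substituting $d(x) = |L_x| - \sum_{y\in V_x}\ell(y)$ and using $|L_x| = \sum_c |L_c|$ and $V_x = \{x\}\cup\bigcup_c V_c$, both sides collapse to the same quantity, so the identity is a telescoping/bookkeeping consequence of the definitions — it does not actually use that $\tau$ is a suffix tree. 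What \emph{does} use the suffix-tree hypothesis is handling the signs: \autoref{def:suffix_tour_graph} reverses an edge when $d(x)<0$, so to conclude balance I must argue $d(x)\ge 0$ for all $x$ (and that leaf values are exactly $1$), which is where the combinatorics of the suffix chain enters. Concretely, $\sum_{y\in V_x}\ell(y)$ counts chain-transitions from a leaf into $V_x$, and each such transition must eventually be matched by the chain leaving $V_x$ through a leaf of $V_x$ (the chain ends at the root, outside every proper $V_x$); monotonicity of parent-distances along the chain (\autoref{lm:pardist}, \autoref{lm:LeafEdges}) guarantees that once the chain is in the implicit/explicit-suffix-node regime it cannot re-enter a leaf subtree more times than there are leaves, giving $\sum_{y\in V_x}\ell(y)\le |L_x|$, i.e. $d(x)\ge 0$.

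The main obstacle I anticipate is precisely this nonnegativity/orientation argument: proving $d(x)\ge 0$ rigorously requires a careful accounting of how the suffix chain enters and exits the subtree $V_x$ — relating each unit of $\ell(\cdot)$-mass inside $V_x$ to a distinct leaf of $V_x$ via the chain order — rather than the purely formal telescoping that gives the degree \emph{identity}. I would organize this as: (i) set up the normalized $S$ and its suffix chain; (ii) prove the bookkeeping identity $d(x)+\ell(x)=\sum_c d(c)$ and $d(\text{leaf})=1$ as algebraic identities; (iii) prove $d(x)\ge 0$ using the segment structure of \autoref{lm:suffix_chain} together with \autoref{lm:LeafEdges}; (iv) conclude in-degree $=$ out-degree at every vertex, hence $G$ is Eulerian (on each connected component). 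Since this lemma is attributed to I~et~al.~\cite{InferringStrings}, I would also remark that the same proof goes through verbatim for suffix trees as for \$-suffix trees, the only new ingredient being the normalization via \autoref{lm:good_implicit_ST}, which supplies a string whose leaf-edge lengths and implicit-node placements make the degree computation clean.
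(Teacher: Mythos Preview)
Your telescoping identity $d(x)+\ell(x)=\sum_{c}d(c)$ is exactly the paper's argument, and that part is fine. The gap is your belief that you must additionally prove $d(x)\ge 0$ to ``handle the signs.'' You do not. Look again at \autoref{def:suffix_tour_graph}: when $d(x)<0$ the edge $(par(x)\to x)^{d(x)}$ is defined to be $(x\to par(x))^{|d(x)|}$. Hence for \emph{every} node $x$, regardless of the sign of any $d$-value, the signed quantity $\text{indeg}(x)-\text{outdeg}(x)$ equals $d(x)+\ell(x)-\sum_{c\in\text{children}(x)}d(c)-[x\text{ is a leaf}]$. Your algebraic identity then gives balance immediately, with no appeal to the suffix-tree hypothesis, the suffix chain, or the normalization of \autoref{lm:good_implicit_ST}. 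The paper's proof is exactly this two-line computation.

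Worse, the nonnegativity claim you set out to prove is false for general suffix trees: \autoref{lm:preconditions}(3) asserts $d(x)\ge 0$ only for \$-suffix trees, and later in \autoref{sec:algorithm} the paper explicitly works with suffix trees having nodes with $d(x)=-1$. So your step (iii) would not go through. Relatedly, your leaf computation is off: for a leaf $y$ one has $d(y)=|L_y|-\ell(y)=1-\ell(y)$, not $d(y)=1$; what makes leaves balanced is $d(y)+\ell(y)=1$, matching the single outgoing leaf-contribution arc. Drop the normalization and the nonnegativity argument entirely and you recover the paper's proof.
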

\begin{proof}
I et al.~\cite{InferringStrings} only proved the lemma for $\$$-suffix trees, but the proof holds for suffix trees as well. We give the proof here for completeness and because I et al. use different notation. To prove the lemma it suffices to show that for every node the number of incoming edges equals the number of outgoing edges. 

Consider an internal node $x$ of $\tau$. It has $\sum_{\substack{z \in children(x)}} d (z)$ outgoing edges and $\ell(x) + d(x)$ incoming edges. But, $\ell(x) + d(x)$ equals

$$|L_x| - \sum_{\substack{y \in V_x \setminus \{x\}}} \ell (y) = \sum_{\substack{z \in children(x)}} \bigl( |L_z| - \sum_{\substack{y \in V_z}} \ell(y) \bigr) = \sum_{\substack{z \in children(x)}} d (z)$$

Now consider a leaf $y$ of $\tau$. The outdegree of $y$ is one, and the indegree is equal to $\ell(x) + d(x) = \ell(x) + 1 - \ell(x) = 1$.
\end{proof}

\subsection{Suffix tour graph of a \$-suffix tree}
The following proposition follows from the definition of a $\$$-suffix tree.

\begin{proposition}[\cite{InferringStrings}] \label{lm:preconditions}
 If $\tau$ is a $\$$-suffix tree with a set of suffix links $\suflink$ and first letters on edges defined by a labelling function $\fl$, then
 \begin{itemize}
 	\item[(1)] For every internal explicit node $x$ in $\tau$ there exists a unique path $x = x_0 \suf x_1 \suf \ldots \suf x_k =\mbox{root}$ such that $ x_i \suf x_{i+1}$ belongs to $\suflink$ for all $i$;
 	\item[(2)] If $y$ is the end of the suffix link for $par (x)$, there is a child $z$ of $y$ such that $\fl (par (x) \rightarrow x) = \fl (y \rightarrow z)$, and the end of the suffix link for $x$ belongs to the subtree of $\tau$ rooted at $y$;
 	\item[(3)] For any node $x \in V$ the value $d(x) \ge 0$.
 \end{itemize}
\end{proposition}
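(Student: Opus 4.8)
The plan is to fix a string $S = T\$$ whose suffix tree is isomorphic to $\tau$, to identify every explicit node with the factor of $S$ that labels it, and to use two standard facts. First, the suffix link target of an explicit branching node is again an explicit branching node: if $cv$ has two distinct right extensions in $S$ then so does its substring $v$; combined with the second fact, this shows that the extended suffix link of \emph{any} explicit node is explicit. Second, because $S$ ends with the unique symbol $\$$, every non-empty suffix of $S$ occurs exactly once and is never a proper prefix of another suffix, so the leaves of $\tau$ are in bijection with the non-empty suffixes of $S$ (the empty suffix being the root), and in particular $S$ has exactly one suffix of each length. The three items are then treated separately: the first two are label bookkeeping, and the third is where the work lies.

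For item (1), start at an internal explicit node $x$ with label $a_1 a_2 \cdots a_m$; its suffix link points to the locus of $a_2 \cdots a_m$, which is again explicit and internal by the first fact and has string depth exactly one smaller. Iterating strictly decreases the string depth, so after finitely many steps we reach string depth $0$, i.e. the root, and the path is unique because $\suflink$ is a function. For item (2), write $\alpha$ for the label of $par(x)$, $c = \fl(par(x)\to x)$, $y = \suflink(par(x))$, and $\beta$ for the label of $y$, using the convention $\suflink(\text{root}) = \perp$, whose single ``?''-edge to the root matches every letter. Then the label of $x$ begins with $\alpha c$ and $\beta$ is $\alpha$ with its first letter deleted; since $S$ contains $\alpha c$ it also contains the substring $\beta c$, so $y$ has an outgoing edge with first letter $c$, and we let $z$ be its lower endpoint, so $\fl(y\to z) = c$. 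The label of the extended suffix link target of $x$ is the label of $x$ with its first letter deleted, hence begins with $\beta c$; this target is explicit by the first fact (it is internal explicit if $x$ is internal, and a leaf or the root if $x$ is a leaf), so it is $z$ or a descendant of $z$, and therefore lies in the subtree rooted at $y$.

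For item (3) I would prove $\sum_{z \in V_x} \ell(z) \le |L_x|$. First, every leaf $w$ is counted by $\ell$ at exactly one node, say $\pi(w)$: $par(w)$ determines $\suflink(par(w))$, which — by the same argument as in item (2) — has an outgoing edge whose first letter is $\fl(par(w)\to w)$, and since the edges leaving a node have distinct first letters this edge, hence its lower endpoint $\pi(w)$, is unique. So $\sum_{z \in V_x} \ell(z)$ is the number of leaves $w$ with $\pi(w) \in V_x$. Now send each such $w$ to its extended suffix link target: by the label computation of item (2) applied to $w$ this target lies in the subtree rooted at $\pi(w)$, hence in $V_x$; by the second fact it is itself a leaf unless $w$ is the shortest suffix $\$$, in which case it is the root; and the extended suffix link is injective on leaves because $S$ has a unique suffix of each length. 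If $x$ is not the root, then the root is not in $V_x$, so $w = \$$ cannot satisfy $\pi(w) \in V_x$, and we get an injection from $\{w : \pi(w) \in V_x\}$ into $L_x$, giving $d(x) \ge 0$. If $x$ is the root (or $\perp$), then every leaf is assigned somewhere, so $\sum_{z \in V_x}\ell(z) = |L_x|$ and $d(x) = 0$.

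I expect item (3) to be the main obstacle: items (1) and (2) drop out as soon as labels are written down, whereas (3) needs the right injection — the extended suffix link, restricted to the leaves assigned into $V_x$, lands among the leaves of $V_x$ — and then the separate handling of the boundary leaf $\$$ and the case $x = \text{root}$, which is exactly where the uniqueness of the end marker $\$$ is used.
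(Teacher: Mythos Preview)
The paper does not give its own proof of this proposition: it is stated with a citation to I~et~al.~\cite{InferringStrings} and used as a black box. Your argument is correct and is essentially the standard one. Items (1) and (2) are, as you say, bookkeeping with labels once one knows that the suffix link of an explicit node is explicit in a \$-suffix tree. For item (3), your injection is exactly the right idea: each leaf $w$ contributes to $\ell(\cdot)$ at a single well-defined node $\pi(w)$ (existence by item (2), uniqueness because outgoing edges have distinct first letters), and the extended suffix link sends $w$ into the subtree rooted at $\pi(w)$; in a \$-suffix tree this target is again a leaf except for the single leaf labeled $\$$, whose $\pi$ is the root. Your separate treatment of $x=\text{root}$ and of the boundary leaf $\$$ is precisely where the terminal symbol is used, and it cleanly yields $d(x)\ge 0$ (with equality at the root).
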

If all tree conditions hold, it can be shown that

%
\begin{lemma}[\cite{InferringStrings}]\label{lm:STGisEulerian}
The tree $\tau$ is a $\$$-suffix tree iff its suffix tour graph $G$ contains a cycle $C$ which goes through the root and all leaves of $\tau$. Moreover, a string realizing $\tau$ can be inferred from $C$ in linear time. 
\end{lemma}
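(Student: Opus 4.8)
The plan is to establish the equivalence in two directions, treating the three structural conditions of \autoref{lm:preconditions} as a standing hypothesis: when $\tau$ is a \$-suffix tree they hold automatically, and when one of them fails $\tau$ is not a \$-suffix tree and there is nothing to prove. The conceptual backbone is \autoref{obs:suffix_chain}: a string realizing $\tau$ is determined by the cyclic order in which the suffix chain visits the subtrees hanging off the root, so the whole argument reduces to matching the leaf order of the suffix chain with the leaf order of a cycle in $G$. We also use repeatedly that in a \$-suffix tree the suffix nodes are exactly the leaves together with the root, so by \autoref{lm:suffix_chain} the suffix chain has the shape $y_0 \suf y_1 \suf \cdots \suf y_{L-1} \suf y_L = \text{root}$, where $y_0,\ldots,y_{L-1}$ are all $L$ leaves of $\tau$.

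\emph{From a realizing string to a cycle.} Suppose $S\$$ realizes $\tau$. Since the $L$ leaves of $\tau$ are in one-to-one correspondence with the non-empty suffixes of $S\$$, we have $|S\$| = L$ and the suffix chain of $S\$$ is as above. I would render this chain as a walk in $G$. Recall from the proof of \autoref{lm:eulerian} that every leaf $y_i$ has a unique out-edge, the $\ell$-edge pointing to the node $x$ that is the child --- along the letter $\fl(par(y_i) \to y_i)$ --- of the target of the suffix link of $par(y_i)$. Since the label of $y_{i+1}$ is that of $y_i$ with its first character removed, $y_{i+1}$ lies in the subtree $V_x$, and one descends from $x$ to $y_{i+1}$ along parent-to-child tree edges of $G$ (all of which point downward, since $d(\cdot)\ge 0$ for a \$-suffix tree by \autoref{lm:preconditions}(3)). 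Concatenating, $y_0 \to x \to \cdots \to y_1 \to x' \to \cdots \to y_{L-1} \to \text{root} \to \cdots \to y_0$, gives a closed walk $C$ through the root and every leaf. The step needing care is checking that $C$ uses every edge of its component exactly once: the $\ell$-edges are covered because each leaf is visited once and contributes its unique out-edge, and for the tree edges one reruns the flow-balance accounting across the cut defined by an edge $(par(x)\to x)$ that underlies \autoref{lm:eulerian}, obtaining that this edge is traversed exactly $d(x)$ times. Thus $C$ is an Euler tour of the component containing the root.

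\emph{From a cycle to a realizing string, and inference.} Conversely, suppose $G$ contains a cycle $C$ through the root and all leaves. Since $G$ is Eulerian by \autoref{lm:eulerian}, we may splice in any remaining cycles of its connected component --- which changes neither the leaves nor the root visited --- and assume that $C$ is an Euler tour of that component. As each leaf has in-degree and out-degree $1$, $C$ meets every leaf exactly once, so reading $C$ from just after the root yields a cyclic sequence $y_0,y_1,\ldots,y_{L-1}$ of all leaves. Define $S$ of length $L-1$ by letting $(S\$)[i]$ be the first letter $\fl(\text{root}\to z)$ on the edge from the root to the child $z$ of the root with $y_{i-1}\in V_z$; in particular $(S\$)[L]$ is the label of the edge to $y_{L-1}$, which \autoref{lm:preconditions}(2) forces to be the unique end symbol. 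Extracting this from $C$ is linear time. It remains to show $S\$$ realizes $\tau$, and the natural route is to prove that $y_0 \suf y_1 \suf \cdots \suf y_{L-1} \suf \text{root}$ is exactly the suffix chain of $S\$$: walking along $C$, every leaf-to-leaf stretch is an $\ell$-edge followed by a (possibly empty) downward tree descent, and conditions (1)--(3) of \autoref{lm:preconditions} guarantee such a stretch is a legal suffix-link step --- (2) makes the $\ell$-edge target a genuine ancestor of the next leaf with the matching first letter, (1) makes the internal suffix links acyclic and all terminating at the root, and (3) makes the tree-edge multiplicities nonnegative so the descents are well defined. Once this leaf order is known to be a valid suffix chain, \autoref{obs:suffix_chain} pins down $S$, and the branching structure matches automatically because every internal node of $\tau$ is the lowest common ancestor of a family of leaves whose relative order the suffix chain realizes.

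\emph{Main obstacle.} The crux is this last verification: turning the bare existence of $C$ into a proof that the single string read off from it respects all internal suffix links and all branching nodes of $\tau$ at once. I would isolate an auxiliary lemma stating that along the Euler tour $C$, for any two consecutive leaves $y$ and $y'$ one has $par(y)\suf u$ for an ancestor $u$ of $y'$ with $\fl(par(y)\to y)$ equal to the first letter on the edge of $\tau$ leaving $u$ toward $y'$; granting this, $y_0,\ldots,y_{L-1}$ is a legal suffix chain and the rest is bookkeeping through \autoref{obs:suffix_chain}. By comparison the forward direction is lighter, essentially re-deriving the edge balance already used for \autoref{lm:eulerian}.
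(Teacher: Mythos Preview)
The paper does not actually prove this lemma: it is quoted from I~et~al.~\cite{InferringStrings}, and the paper's own contribution is the short paragraph following the statement, which records only the key fact that the order in which an Euler cycle $C$ visits the leaves coincides with the order of suffixes, so that the $i$th character of a realizing string is the first letter on the root edge toward the $i$th leaf. Your proposal is entirely compatible with this sketch and in fact fleshes out both directions in the way one would expect the original proof in~\cite{InferringStrings} to go: map the suffix chain to a closed walk in $G$ via the $\ell$-edges and downward tree edges (forward), and read a candidate string off the leaf order of $C$ (backward, with linear-time inference).

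The one place where your outline is thinner than a full proof is the last step of the converse. You reduce matters to an auxiliary lemma about consecutive leaves on $C$ and then assert that ``the branching structure matches automatically because every internal node of $\tau$ is the lowest common ancestor of a family of leaves whose relative order the suffix chain realizes.'' This is not quite automatic: the leaf order of a suffix chain determines the string, but it does not by itself determine where internal branching occurs in the suffix tree of that string---string depths are needed too, and these come from the given suffix links via condition~(1) of \autoref{lm:preconditions}. A complete argument has to show that each internal node of $\tau$ has the \emph{correct} string depth in the suffix tree of $S\$$, typically by induction along the suffix-link paths of condition~(1), simultaneously verifying that the first letters $\fl$ agree. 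Your auxiliary lemma handles one step of this induction but you still need to propagate it to all internal nodes. This is bookkeeping rather than a new idea, and you clearly see it coming, but as written the sentence about LCA's is doing more work than it can bear.
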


In more detail, the authors proved that the order of leaves in the cycle $C$ corresponds to the order of suffixes. That is, the $i^\text{th}$ leaf after the root corresponds to the $i^\text{th}$ longest suffix. Thus, the string can be reconstructed in linear time: its $i^\text{th}$ letter will be equal to the first letter on the edge in the path from the root to the $i^\text{th}$ leaf. Note that the cycle and hence the string is not necessarily unique. See \autoref{fig:stgexample} for an example.

\subsection{Suffix tour graph of a suffix tree}
\label{sec:st-stg}

We now focus on suffix tour graphs of general input trees, which are not necessarily \$-suffix trees. If the input tree $\tau$ is a suffix tree, but not a \$-suffix tree, the suffix tour graph does not necessarily contain a cycle through the root and the leaves. This is illustrated by the example in \autoref{fig:stgexample2}. We therefore have to devise a new approach.

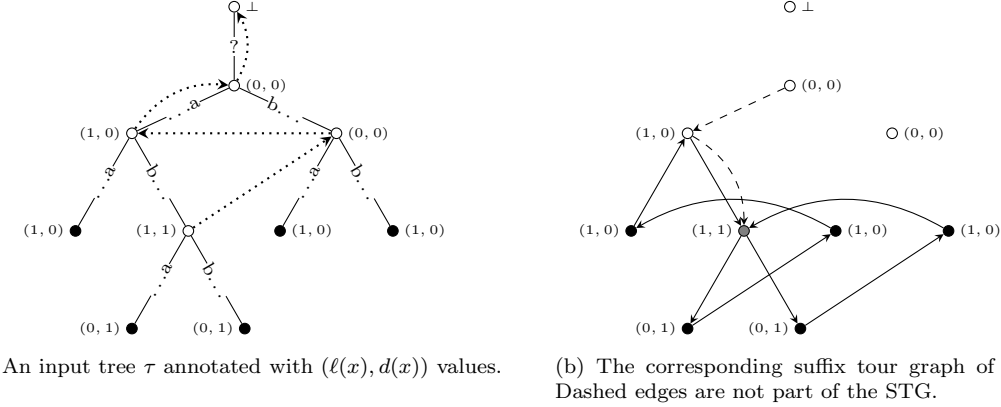
\begin{figure}
\centering
\subfloat[An input tree $\tau$ annotated with $(\ell(x),d(x))$ values.]{
\begin{tikzpicture}[scale=1.5]
\path[use as bounding box] (-2.3,1) rectangle (2.3,-2.3);
                \draw[every label/.style={draw=none,outer sep=2pt,rectangle,font=\tiny}, every node/.style={draw, circle,minimum size=4pt,inner sep=0pt}] (0,0) node[label=right:{$(0,0)$}] (root) {}
                      (root) +(90:0.7cm) node[label=right:$\perp$] (perp) {}
                      (root) +(-25:1cm) node[label=right:{$(0,0)$}] (b) {}
                        (b)+(-60:1cm) node[fill, label=right:{$(1,0)$}] (bb) {}
                        (b)+(-120:1cm) node[fill, label=right:{$(1,0)$}] (ba) {}
					  (root) +(-155:1cm) node[label=left:{$(1,0)$}] (a) {}
                        (a)+(-60:1cm) node[label=left:{$(1,1)$}] (ab) {}
                          (ab)+(-120:1cm) node[fill, label=left:{$(0,1)$}] (aba) {}
                          (ab)+(-60:1cm) node[fill, label=left:{$(0,1)$}] (abb) {}
                        (a)+(-120:1cm) node[fill, label=left:{$(1,0)$}] (aa) {}
                 ;
                \path[every node/.style={fill=white,sloped,anchor=center,allow upside down,font=\scriptsize,inner sep=0pt}]
                   (root) edge node[sloped=false, inner sep=1pt] {?} (perp)
                   (a) edge node {$\ldots$a} (root)
                   (root) edge node {b$\ldots$} (b)
				   (b) edge node {b$\ldots$} (bb)
				   (ba) edge node {$\ldots$a} (b)
				   (aa) edge node {$\ldots$a} (a)
				   (a) edge node {b$\ldots$} (ab)
				   (aba) edge node {$\ldots$a} (ab)
				   (ab) edge node {b$\ldots$} (abb)
                  ;
                  \draw[->,thick,dotted] (ab) -- (b);
                  \draw[->,thick,dotted] (b) -- (a);
                  \draw[->,thick,dotted, bend left] (a) edge (root);
                  \draw[->,thick,dotted, bend right] (root) edge (perp);
\end{tikzpicture}
}
\qquad
\subfloat[The corresponding suffix tour graph of $\tau$. Dashed edges are not part of the STG.]{
\begin{tikzpicture}[scale=1.5]
\path[use as bounding box] (-2,1) rectangle (2,-2.3);
                \draw[every label/.style={draw=none,outer sep=2pt,rectangle,font=\tiny}, every node/.style={draw, circle,minimum size=4pt,inner sep=0pt}] (0,0) node[label=right:{$(0,0)$}] (root) {}
                      (root) +(90:0.7cm) node[label=right:$\perp$] (perp) {}
                      (root) +(-25:1cm) node[label=right:{$(0,0)$}] (b) {}
                        (b)+(-60:1cm) node[fill, label=right:{$(1,0)$}] (bb) {}
                        (b)+(-120:1cm) node[fill, label=right:{$(1,0)$}] (ba) {}
					  (root) +(-155:1cm) node[label=left:{$(1,0)$}] (a) {}
                        (a)+(-60:1cm) node[fill=gray,label=left:{$(1,1)$}] (ab) {}
                          (ab)+(-120:1cm) node[fill, label=left:{$(0,1)$}] (aba) {}
                          (ab)+(-60:1cm) node[fill, label=left:{$(0,1)$}] (abb) {}
                        (a)+(-120:1cm) node[fill, label=left:{$(1,0)$}] (aa) {}
                 ;
                 \draw[->,bend right] 
                 (bb) edge (ab)
                 (ba) edge (aa)
                 ;
                 \draw[->]                 
                 (a) edge (ab)
                 (ab) edge (aba)
                 (ab) edge (abb)
                 (aba) edge (ba)
                 (abb) edge (bb)
                 (aa) edge (a)
                 ;
             	 \draw[dashed, ->] (root) edge (a);
             	 \draw[dashed, ->, bend left] (a) edge (ab); 
\end{tikzpicture}
}
\caption{This example shows the suffix tour graph (b) of an input tree $\tau$ (a), which is a suffix tree (it can be realized for example by the string \texttt{ababaa}), but not a \$-suffix tree. Contrary to the example in \autoref{fig:stgexample}, the suffix tour graph (b) does not contain a cycle going through the root and the leaves.\label{fig:stgexample2}}
\end{figure}

The high level idea of our solution is to try to augment the input tree so that the augmented tree is a \$-suffix tree. More precisely, we will try to augment the suffix tour graph of the tree to obtain a suffix tour graph of a \$-suffix tree. It will be essential to understand how the suffix tour graphs of suffix trees and \$-suffix trees are related. 

Let $\ST$ and $\ST_\$$ be the suffix tree and the \$-suffix tree of a string. We call a leaf of $\ST_\$$ a \emph{$\$$-leaf} if the edge ending at it is labeled by a single letter $\$$. Note that to obtain $\ST_\$$ from $\ST$ we must add all $\$$-leaves, their parents, and suffix links between the consecutive parents to $\ST$. 
We denote the deepest \$-leaf by $s$. 

An internal node $x$ of a suffix tour graph has $d(x)$ incoming arcs produced from edges and $\ell(x)$ incoming arcs produced from suffix links. All arcs outgoing from $x$ are produced from edges, and there are $d(x) + \ell(x)$ of them since suffix tour graphs are Eulerian graphs. A leaf $x$ of a suffix tour graph has $d(x)$ incoming arcs produced from edges, $\ell(x)$ incoming arcs produced from suffix links, and one outgoing arc produced from a suffix link. Below we describe what happens to the values $d(x)$ and $\ell(x)$, and to the outgoing arcs produced from suffix links. These two things define the changes to the suffix tour graph.

\begin{lemma}
\label{lm:l(v)}
For the deepest \$-leaf $s$ we have $\ell(s) = 0$ and $d(s) = 1$. The $\ell$-values of other \$-leaves are equal to one, and their $d$-values are equal to zero.
\end{lemma}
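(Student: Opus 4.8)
My plan is to describe exactly how the $\$$-suffix tree $\ST_\$$ is obtained from $\ST$ and then read the $\ell$- and $d$-values off \autoref{def:suffix_tour_graph}. Two preliminary observations carry most of the weight. First, every $\$$-leaf $x$ is a tree leaf, so $|L_x|=1$ and $V_x=\{x\}$, whence $d(x)=|L_x|-\sum_{y\in V_x}\ell(y)=1-\ell(x)$; it therefore suffices to compute the $\ell$-values. Second, $\$$ occurs exactly once in $S\$$, so any edge of $\ST_\$$ whose first letter is $\$$ is labelled by the single symbol $\$$ and hence ends at a $\$$-leaf. Consequently, whenever $x$ satisfies $\fl(par(x)\to x)=\$$ --- in particular whenever $x$ is a $\$$-leaf --- the leaves contributing to $\ell(x)$ are precisely the $\$$-leaves $z$ with $par(z)\suf par(x)$.

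Next I would make the structure of the $\$$-leaves explicit. Let $y_0\suf y_1\suf\cdots\suf y_n=\mbox{root}$ be the suffix chain of $\ST$ and let $y_j$ be its first non-leaf node. Inserting $\$$ makes each of $y_j,y_{j+1},\dots,y_n$ explicit (if it is not already) and hangs a $\$$-leaf off it; these are all of the $\$$-leaves, and the parent in $\ST_\$$ of the $\$$-leaf coming from $y_{j+k}$ is the explicit node $v_k$ at the locus of the suffix that $y_{j+k}$ represents. By the definition of suffix links these parents form a simple chain $par(s)=v_0\suf v_1\suf\cdots\suf v_t=\mbox{root}\suf\perp$, where $t=n-j$ and $s$ is the deepest $\$$-leaf; each $v_k$ with $k<t$ has exactly one $\$$-child, and every link of this chain, including $\mbox{root}\suf\perp$, belongs to $\suflink$ because it is the suffix link of an internal explicit node of $\ST_\$$ (or the augmented link of the root). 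I would also verify the converse: the locus in $\ST_\$$ of a suffix of $S$ that occurs only once stays non-branching, so the only parents of $\$$-leaves are $v_0,\dots,v_t$.

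The conclusion is then immediate. Since suffix links cannot form cycles, $v_0\suf v_1\suf\cdots\suf v_t\suf\perp$ is a simple path beginning at $v_0$, so no $v_m$ satisfies $v_m\suf v_0$; by the first paragraph this forces $\ell(s)=0$, hence $d(s)=1-\ell(s)=1$. For any other $\$$-leaf $z$, its parent is some $v_k$ with $1\le k\le t$, and the only parent of a $\$$-leaf whose suffix link points to $v_k$ is $v_{k-1}$ (suffix links are injective), which has a single $\$$-child; by the first paragraph $\ell(z)=1$, hence $d(z)=0$. The degenerate case $t=0$ (no repeated suffix of $S$) is consistent, since then $s$ is the only $\$$-leaf. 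The step I expect to be the real work is the second paragraph: carefully matching the parents of the $\$$-leaves of $\ST_\$$ with the loci of the repeated suffixes of $S$, verifying that consecutive such loci are joined by links that actually lie in $\suflink$, and handling the boundary situations (the $\$$-leaf hanging directly off the root, and the new explicit nodes created when $\$$-leaves are inserted on leaf edges).
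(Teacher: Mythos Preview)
Your proposal is correct and follows essentially the same approach as the paper's proof, only spelled out in considerably more detail. The paper argues by contradiction that $\ell(s)=0$ (any contributing leaf would be a deeper $\$$-leaf) and then states in one line that every other $\$$-leaf has $\ell$-value~$1$ because its parent receives the suffix link of the previous $\$$-leaf's parent; your chain $v_0\suf v_1\suf\cdots\suf v_t$ makes exactly this structure explicit and, in addition, justifies why $\ell(z)\le 1$ rather than merely $\ell(z)\ge 1$, a point the paper leaves implicit.
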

\begin{proof}
Suppose that $\ell(s) = 1$. Then there is a leaf $y$ such that $par_\$(y) \suf par_\$(s)$ is a suffix link in $\suflink$, and the first letter on the edge from $par_\$(y)$ to $y$ is \$. That is, $y$ is a \$-leaf and its string depth is bigger than the string depth of $s$, which is a contradiction. Hence, $\ell(s) = 0$ and therefore $d(s) = 1$. The parent of any other $\$$-leaf $y$ will have an incoming suffix link from the parent of the previous $\$$-leaf and hence $\ell(y) = 1$ and $d(y) = 0$.
\end{proof}

\noindent The important consequence of \autoref{lm:l(v)} is that in the suffix tour graph of $ST_\$$ all the \$-leaves are connected by a path starting in the deepest \$-leaf and ending in the root.

Next, we consider nodes that are explicit in $\ST$ and $\ST_\$$. If a node~$x$ is explicit in both trees, we denote its (explicit) parent in $\ST$ by $par(x)$ and in $\ST_\$$~--- by $par_\$(x)$. Below in this section we assume that each edge of $\ST$ contains at most one implicit suffix node at distance one from its parent.

\begin{lemma}\label{lm:bothexplicitorimplicit}
Consider a node $x$ of $\ST$. If a leaf $y$ contributes to $\ell(x)$ either in $ST$ or $ST_\$$, and $par_\$(y)$ and $par_\$(x)$ are either both explicit or both implicit in $ST$, then $y$ contributes to $\ell(x)$ in both trees.
\end{lemma}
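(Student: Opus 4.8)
The plan is to follow how the suffix tree $\ST$ becomes the \$-suffix tree $\ST_\$$, to locate the parents $par_\$(y)$ and $par_\$(x)$, and then to split according to the hypothesis. I identify a leaf $y$ of $\ST$ with label $w$ with the leaf of $\ST_\$$ labelled $w\$$, and a node $x$ of $\ST$ with the node of $\ST_\$$ of the same label (a branching node of $\ST$ stays explicit in $\ST_\$$; a leaf of $\ST$ turns implicit, and then I work with its \$-leaf). The recurring fact, which uses the standing assumption of this section (see \autoref{lm:good_implicit_ST}) that every edge of $\ST$ carries at most one implicit suffix node and that it lies at distance one below the upper end of the edge, is: for a node $z$ of $\ST$, $par_\$(z)$ is either $par(z)$ — and then $par_\$(z)$ is explicit in $\ST$ — or the unique implicit suffix node $v_z$ sitting at distance one below $par(z)$ towards $z$ — and then $par_\$(z)$ is implicit in $\ST$, since $v_z$ is. Hence ``both explicit in $\ST$'' means $par_\$(y)=par(y)$ and $par_\$(x)=par(x)$ (no such $v_y,v_x$), while ``both implicit in $\ST$'' means $par_\$(y)=v_y$ and $par_\$(x)=v_x$ with both present.

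In the ``both explicit'' case I would show the two conditions defining ``$y$ contributes to $\ell(x)$'' hold verbatim in $\ST$ iff they hold in $\ST_\$$. For the suffix-link condition: a branching node of $\ST$ has a branching suffix-link target (or points to $\perp$ when it is the root), so its suffix link is the same node in $\ST$ and in $\ST_\$$; hence $par(y)\suf par(x)$ holds in one tree iff in the other. For the first-letter condition: since no implicit suffix node lies on the edges $par(y)\rightarrow y$ or $par(x)\rightarrow x$, these edges are not split on the way to $\ST_\$$, so the first letters below $par(y)$ and $par(x)$ towards $y$ and $x$ are unchanged.

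The ``both implicit'' case carries the weight. Write $a=\fl(par(y)\rightarrow y)$ and $b=\fl(par(x)\rightarrow x)$; then the label of $v_y$ is that of $par(y)$ with $a$ appended, and the label of $v_x$ is that of $par(x)$ with $b$ appended. Since deleting the first letter of a node label moves the parent to its suffix link, in $\ST_\$$ the suffix link of $v_y$ points to the node whose label is that of the suffix link of $par(y)$ (in $\ST$) with $a$ appended. Comparing last letters with the label of $v_x$ yields the key equivalence: $v_y\suf v_x$ in $\ST_\$$ iff the suffix link of $par(y)$ in $\ST$ equals $par(x)$ and $a=b$, i.e. iff $y$ contributes to $\ell(x)$ in $\ST$ (the case $par(y)=\mathrm{root}$ is immediate, as then $v_x$ would have to be the root). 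This settles the suffix-link part in both directions. For the first-letter part, note that in $\ST_\$$ each of $v_y,v_x$ is explicit with exactly two children: a \$-edge, since they are suffix nodes of $\ST$, and the surviving tail of the old $\ST$-edge (which starts with a letter of $S$, hence is not \$). When $y$ contributes to $\ell(x)$ in $\ST$ the key equivalence gives $v_y\suf v_x$ in $\ST_\$$, and since the label of $v_x$ is a suffix of the label of $v_y$, every one-letter right extension of the label of $v_y$ occurring in $S\$$ also occurs after the label of $v_x$; as the tail letters are not \$, the non-\$ continuation letters at $v_y$ and at $v_x$ must agree, which is exactly $\fl_{\ST_\$}(par_\$(y)\rightarrow y)=\fl_{\ST_\$}(par_\$(x)\rightarrow x)$. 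Combining the two parts settles this case.

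I expect the main obstacle to be the structural bookkeeping of the first paragraph — determining which nodes of $\ST$ survive as explicit nodes of $\ST_\$$, where the new explicit nodes (the former implicit suffix nodes) appear, and hence exactly what $par_\$$ is — together with the label computation of the third paragraph. In particular, one has to notice that the ``both explicit / both implicit'' hypothesis is precisely what prevents a length mismatch (which would make $|\mathrm{label}(par(y))|$ differ from $|\mathrm{label}(par(x))|+1$) that would otherwise break the correspondence between the suffix-link conditions in the two trees.
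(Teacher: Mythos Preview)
Your proof is correct and follows essentially the same route as the paper: split according to whether $par_\$(y)$ and $par_\$(x)$ are both explicit or both implicit in $\ST$, and in the implicit case compare labels using the standing distance-one assumption to translate the suffix-link and first-letter conditions between the two trees. If anything, you are more careful than the paper in the implicit case, since you spell out the ``extension'' argument showing that the non-\$ first letters below $v_y$ and $v_x$ agree in $\ST_\$$; the paper compresses this into the single clause ``since $y$ and $x$ are not \$-leaves''.
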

\begin{proof}
If $par_\$(y)$ and $par_\$(x)$ are explicit, the claim follows straightforwardly.

Consider now the case when $par_\$(y)$ and $par_\$(x)$ are implicit. Suppose first that $y$ contributes to $\ell(x)$ in $\ST_\$$. Then the labels of $par_\$ (y)$ and $par_\$ (x)$ are $L a$ and $L[2..]a$ for some string $L$ and a letter $a$. Remember that distances between $par_\$(y)$ and $par (y)$ and between $par_\$ (x)$ and $par (x)$ are equal to one. Therefore, labels of $par(y)$ and $par(x)$  are $L$ and $L[2..]$, and the first letters on edges $par(x) \rightarrow x$ and $par(y) \rightarrow y$ are equal to $a$. Consequently, $par(y) \suf par(x)$ is a suffix link, and $y$ contributes to $\ell(x)$ in $\ST$ as well.

Now suppose that $y$ contributes to $\ell(x)$ in $\ST$. Then the labels of $par(y)$ and $par(x)$ are $L$ and $L[2..]$, and the first letters on the edges $par(y)\rightarrow y$ and $par(x)\rightarrow x$ are equal to some letter $a$. This means that the labels of $par_\$ (y)$ and $par_\$ (x)$ are $L a$ and $L[2..]a$, and hence there is a suffix link from $par_\$ (y)$ to $par_\$ (x)$. Since $y$ and $x$ are not \$-leaves, $y$ contributes to $\ell(x)$ in $\ST_\$$.
\end{proof}

Before we defined the deepest \$-leaf $s$. If the parent of $s$ is implicit in $ST$, the changes between $ST$ and $ST_\$$ are more involved. To describe them, we first need to define the \emph{twist node}. Let $p$ be the deepest explicit parent of any \$-leaf in $ST$. The node that precedes $p$ in the suffix chain is thus an implicit node in $ST$, i.e., it has two children in $ST_\$$, one which is a \$-leaf and another node $y$, which is either a leaf or an internal node. If $y$ is a leaf, let $t$ be the child of $p$ such that $y$ contributes to $\ell(t)$. We refer to $t$ as the \emph{twist} node. 

\begin{lemma}\label{lm:twistnode}
Let $x$ be a node of $\ST$. Upon transition from $\ST$ to $\ST_\$$, the $\ell$-value of $x = t$ increases by one and the $\ell$-value of its parent decreases by one. If $par_\$(x)$ is an implicit node of $\ST$, then $\ell(x)$ decreases by $\ell(par_\$(x))$. Otherwise, $\ell(x)$ does not change.
\end{lemma}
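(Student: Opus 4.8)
The plan is to fix a node $x$ of $\ST$ and evaluate $\Delta\ell(x):=\ell_{\ST_\$}(x)-\ell_{\ST}(x)$ (writing $\ell_{\ST}$, $\ell_{\ST_\$}$ for the $\ell$-values in the two trees) by tracking exactly which leaves change their contribution to $\ell(x)$ when $\$$ is appended. First I would observe that the first letter on the edge entering $x$ is a letter of $\Sigma$ (or the placeholder ``?'', which does not match $\$$), so no $\$$-leaf contributes to $\ell_{\ST_\$}(x)$; since the non-$\$$-leaves of $\ST_\$$ are precisely the leaves of $\ST$, $\Delta\ell(x)$ is a signed count over leaves $z$ of $\ST$ whose contribution flips. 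By \autoref{lm:good_implicit_ST} every edge carries at most one implicit suffix node and it sits at distance one from the parent, so for any node $v$ either $par_\$(v)=par(v)$ is explicit (no such node on $v$'s edge) or $par_\$(v)$ is that implicit suffix node; and by \autoref{lm:bothexplicitorimplicit} a leaf $z$ can flip only when $par_\$(z)$ and $par_\$(x)$ are of opposite type. The workhorse throughout is that a suffix link decreases string depth by exactly one, while $\mathsf{depth}(par_\$(v))-\mathsf{depth}(par(v))\in\{0,1\}$: when $par_\$(z)$ and $par_\$(x)$ have opposite type, $z$ cannot contribute to $\ell(x)$ in both trees, because whichever of $par(z)\suf par(x)$ or $par_\$(z)\suf par_\$(x)$ one assumes, the other misplaces $\mathsf{depth}(par(x))$ relative to $\mathsf{depth}(par(z))$. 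Hence every flipping leaf is a clean $+1$ or $-1$, and the two kinds can be counted independently.

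Suppose first $par_\$(x)=par(x)$ is explicit. Then a flipping leaf $z$ has $par_\$(z)$ implicit, i.e. its (leaf) edge carries an implicit suffix node, which therefore lies in segment~(2) of the suffix chain (\autoref{lm:suffix_chain}). The suffix-link image of a segment-(2) node is the next node of the suffix chain, again in segment~(2) — hence implicit — unless that node is the last of segment~(2) and segment~(3) is empty, in which case the image is the first node of segment~(4), namely $p$. As at most one implicit suffix node sits on an edge, this last segment-(2) node is exactly $q$ and the leaf carrying it is exactly $y$. Chasing first letters, the only possible $+1$ occurs when the image is $p$ and $x$ is the child of $p$ whose edge begins with the first letter below $q$ on $y$'s edge, i.e. $x=t$; symmetrically, the only possible $-1$ comes from $y$ itself, which contributes to $\ell(p)$ in $\ST$ (because $par(y)\suf par(p)$ with matching first letters) but to $\ell(t)$, not $\ell(p)$, in $\ST_\$$. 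The same computation shows the edge $(par(p)\rightarrow p)$ has length one, so $par_\$(p)=par(p)$ is explicit; consequently $x=p=par(t)$ never falls under the implicit-parent case below, and the regimes in the statement are disjoint. Thus $\Delta\ell(t)=+1$, $\Delta\ell(par(t))=-1$, and $\Delta\ell(x)=0$ for every other $x$ with explicit $\ST_\$$-parent.

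Finally suppose $par_\$(x)=q_x$ is an implicit suffix node of $\ST$. Now a flipping leaf $z$ has $par_\$(z)=par(z)$ explicit; since the suffix link of an explicit internal node of $\ST$ targets an explicit internal node and this target is unchanged in $\ST_\$$, the relation $par(z)\suf q_x$ is impossible, so $z$ cannot contribute to $\ell_{\ST_\$}(x)$ and every flipping leaf is a $-1$. By \autoref{lm:bothexplicitorimplicit} the leaves counted by $\ell_{\ST}(x)$ that survive into $\ST_\$$ are exactly those whose edge carries an implicit suffix node, and conversely every leaf counted by $\ell_{\ST_\$}(x)$ is of this form; hence $-\Delta\ell(x)$ equals the number of leaves counted by $\ell_{\ST}(x)$ whose edge carries no implicit suffix node. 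The main obstacle is to identify this number with $\ell(par_\$(x))$ — the $\ell$-value that $q_x$ receives in $\ST_\$$, where it is explicit. I would do this by translating the $\ST_\$$-condition defining $\ell_{\ST_\$}(q_x)$, namely $par_\$(z)\suf q_x$ with equal first letters, back into $\ST$ through the same depth bookkeeping; this recovers precisely the leaves just described, the one potential over-count (the leaf $y$) being ruled out because $par_\$(x)$ being implicit forces $x\neq t$. Assembling the two cases gives the lemma.
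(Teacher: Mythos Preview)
Your proof is correct and follows essentially the same approach as the paper: both arguments reduce to tracking which leaves flip their contribution to $\ell(x)$, invoke \autoref{lm:bothexplicitorimplicit} to force $par_\$(z)$ and $par_\$(x)$ to have opposite explicit/implicit types, and then identify the unique flipping leaf $y$ (yielding the $\pm 1$ at $t$ and $par(t)$) versus the batch of flipping leaves counted by $\ell(par_\$(x))$. The only organizational difference is that the paper splits first by ``gained vs.\ lost contribution'' and then by which parent is implicit, whereas you split first by the type of $par_\$(x)$; your version also adds two observations the paper leaves tacit --- that \$-leaves never contribute to $\ell(x)$ for $x\in\ST$, and that the three regimes in the statement are disjoint because $par_\$(t)=p$ and $par_\$(p)=par(p)$ are both explicit --- but neither changes the substance of the argument.
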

\begin{proof}
The value $\ell(x)$ can change when (1) A leaf $y$ contributes to $\ell(x)$ in $\ST_\$$, but not in $\ST$; or (2) A leaf $y$ contributes to $\ell(x)$ in $\ST$, but not in $\ST_\$$.

In the first case the nodes $par_\$(y)$ and $par_\$(x)$ cannot be both explicit or both implicit. Moreover, from the properties of suffix links we know that if $par_\$(y)$ is explicit in $\ST$, then $par_\$(x)$ is explicit as well~\cite{Gusfield}. Consequently, $par_\$(y)$ is implicit in $\ST$, and $par_\$(x)$ is explicit. Since $par_\$(x)$ is the first explicit suffix node and $y$ is a leaf that contributes to $\ell(x)$, we have $x = t$, and $\ell(x) = \ell(t)$ in $ST_\$$ is bigger than $\ell(t)$ in $\ST$ by one (see~\autoref{fig:subcases-b}).

\begin{figure}
\centering
\subfloat[]{
\begin{tikzpicture}[scale=0.9]
  \draw[every node/.style={draw, circle,minimum size=4pt,inner sep=0pt}] (2,4) node (root) {}
  	(root) +(-110:2cm) node (a) {}
  		(a) +(-110:2cm) node[fill] (b) {}
  		(a) +(-140:1.5cm) node[fill] (leaf) {}
  ;
  \node[right] at (a) {\tiny{$par_\$(y)$}};
  \node[left] at (b) {\tiny{$y$}};
  \path[every node/.style={fill=white,sloped,anchor=center,allow upside down,font=\scriptsize,inner sep=0pt}]
    (root) edge node {a} (a)
    (a) edge node {b\ldots} (b)
    (a) edge node {$\$$} (leaf)
  ;
%
  \draw[every node/.style={draw, circle,minimum size=4pt,inner sep=0pt}] (4,4) node (root1) {}
  	(root1) +(-110:2cm) node (a1) {}
  		(a1) +(-110:2cm) node (b1) {}
  		(a1) +(-140:1.5cm) node[fill] (leaf1) {}
  		(a1)+(-80:1.2cm) node (c1) {}
  ;
  \node[right] at (a1) {\tiny{$par_\$(t)$}};
  \node[right] at (b1) {\tiny{$t$}};
  \path[every node/.style={fill=white,sloped,anchor=center,allow upside down,font=\scriptsize,inner sep=0pt}]
    (root1) edge node {a} (a1)
    (a1) edge node {b\ldots} (b1)
    (a1) edge node {$\$$} (leaf1)
    (a1) edge node {c\ldots} (c1)
  ; 
  %
	\draw[->,dotted, thick, black, shorten <= 0.2pt, shorten >= 0.2pt] (root) to[out=60,in=120] (root1);
	\draw[->,dotted,  thick, black, shorten <= 0.2pt, shorten >= 0.2pt] (a) to[out=60,in=120] (a1);
%
	\draw[->, very thick, gray, shorten <= 0.2pt, shorten >= 0.2pt] (b)--(b1);
%
  \draw[every node/.style={draw, circle,minimum size=4pt,inner sep=0pt}] (-2.1,4) node (root) {}
  	(root) +(-110:4cm) node[fill] (b) {}
  ;
  \node[right] at (root) {\tiny{$par(y)$}};
  \node[left] at (b) {\tiny{$y$}};
  \path[every node/.style={fill=white,sloped,anchor=center,allow upside down,font=\scriptsize,inner sep=0pt}]
    (root) edge node {a\ldots} (b)
  ;
%
  \draw[every node/.style={draw, circle,minimum size=4pt,inner sep=0pt}] (-0.1,4) node (root1) {}
  	(root1) +(-110:2cm) node (a1) {}
  		(a1) +(-110:2cm) node (b1) {}
  		(a1)+(-80:1.2cm) node (c1) {}
  ;
  \node[right] at (a1) {\tiny{$par(t)$}};
  \node[right] at (b1) {\tiny{$t$}};
  \path[every node/.style={fill=white,sloped,anchor=center,allow upside down,font=\scriptsize,inner sep=0pt}]
    (root1) edge node {a} (a1)
    (a1) edge node {b\ldots} (b1)
    (a1) edge node {c\ldots} (c1)
  ; 
  %
	\draw[->,dotted, thick, black, shorten <= 0.2pt, shorten >= 0.2pt] (root) to[out=60,in=120] (root1);
	%
	\draw[->,very thick, gray, shorten <= 0.2pt, shorten >= 0.2pt] (b) to[out=60,in=170] (a1);
	\draw[->,very thick, gray, shorten <= 0.2pt, shorten >= 0.2pt] (a1) to[out=200,in=100] (b1);	
\end{tikzpicture}
  \label{fig:subcases-b}	
}
\subfloat[]{
\begin{tikzpicture}[scale=0.9]
  \draw[every node/.style={draw, circle,minimum size=4pt,inner sep=0pt}] (2,4) node (root) {}
  	(root) +(-110:4cm) node[fill] (a) {}
  ;
  
  \node[right] at (root) {\tiny{$par_\$(y)$}};
  \node[left] at (a) {\tiny{$y$}};
  
  \path[every node/.style={fill=white,sloped,anchor=center,allow upside down,font=\scriptsize,inner sep=0pt}]
    (root) edge node {a\ldots} (a)
  ;

  \draw[every node/.style={draw, circle,minimum size=4pt,inner sep=0pt}] (4,4) node (root1) {}
  	(root1) +(-110:2cm) node (a1) {}
  		(a1) +(-110:2cm) node (b1) {}
  		(a1) +(-140:1.5cm) node[fill] (leaf1) {}
  ;
  
  \node[right] at (a1) {\tiny{$par_\$(x)$}};
  \node[right] at (b1) {\tiny{$x$}};
  
  \path[every node/.style={fill=white,sloped,anchor=center,allow upside down,font=\scriptsize,inner sep=0pt}]
    (root1) edge node {a} (a1)
    (a1) edge node {b\ldots} (b1)
    (a1) edge node {\$} (leaf1)
  ; 
  
	\draw[->,dotted, thick, black, shorten <= 0.2pt, shorten >= 0.2pt] (root) to[out=60,in=120] (root1);
	
	\draw[->,very thick, gray, shorten <= 0.2pt, shorten >= 0.2pt] (a) to[out=60,in=170] (a1);
	\draw[->,very thick, gray, shorten <= 0.2pt, shorten >= 0.2pt] (a1) to[out=300,in=40] (b1);
	
  \draw[every node/.style={draw, circle,minimum size=4pt,inner sep=0pt}] (-1.5,4) node (root) {}
  	(root) +(-110:4cm) node[fill] (a) {}
  ;
  
  \node[right] at (root) {\tiny{$par(y)$}};
  \node[left] at (a) {\tiny{$y$}};
  
  \path[every node/.style={fill=white,sloped,anchor=center,allow upside down,font=\scriptsize,inner sep=0pt}]
    (root) edge node {a\ldots} (a)
  ;

  \draw[every node/.style={draw, circle,minimum size=4pt,inner sep=0pt}] (0.5,4) node (root1) {}
  	(root1) +(-110:4cm) node (b1) {}
  ;
  
  \node[right] at (root1) {\tiny{$par(x)$}};
  \node[right] at (b1) {\tiny{$x$}};
  
  \path[every node/.style={fill=white,sloped,anchor=center,allow upside down,font=\scriptsize,inner sep=0pt}]
    (root1) edge node {a\ldots} (b1)
  ; 
  
	\draw[->,dotted, thick, black, shorten <= 0.2pt, shorten >= 0.2pt] (root) to[out=60,in=120] (root1);
	
	\draw[->,very thick, gray, shorten <= 0.2pt, shorten >= 0.2pt] (a)--(b1);		
\end{tikzpicture}
	\label{fig:subcases-c}
}
\caption{Both figures show $\ST$ on the left and $ST_\$$ on the right.  Edges of the suffix tour graphs that change because of the twist node $t$ (\autoref{fig:subcases-b}) and because of an implicit parent (\autoref{fig:subcases-c}) are shown in grey.}
\label{fig:subcases}
\end{figure}
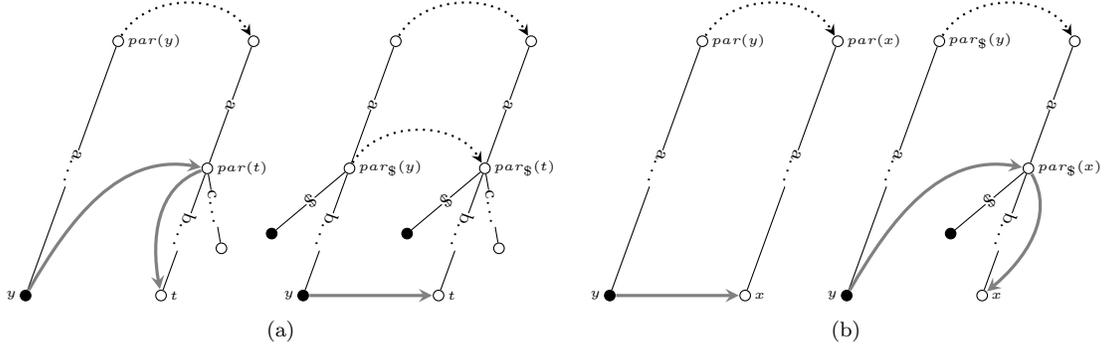

Consider one of the leaves $y$ satisfying $(2)$. In this case $par(y) \suf par(x)$ is a suffix link, and the first letters on the edges $par(y) \rightarrow y$ and $par(x) \rightarrow x$ are equal. Since $y$ does not contribute to $\ell(x)$ in $\ST_\$$, exactly one of the nodes $par_\$(y)$ and $par_\$(x)$ must be implicit in $\ST$. Hence, we have two subcases: (2a) $par_\$(y)$ is implicit in $\ST$, and $par_\$(x)$ is explicit; (2b) $par_\$(y)$ is explicit in $\ST$, and $par_\$(x)$ is implicit.

In the subcase $(2a)$ the distance between $par (y)$ and $par_\$(y)$ is one. The end of the suffix link for $par_\$(y)$ must belong to the subtree rooted at $x$. From the other hand, the string distance from $par(x)$ to the end of the suffix link is one. This means that the end of the suffix link is $x$. Consequently, $x$ is the parent of the twist node $t$, and the value $\ell(x) = \ell(par_\$(t))$ is smaller by one in $\ST_\$$ (see~\autoref{fig:subcases-b}). 

In the subcase $(2b)$ the $\ell$-value of $x$ in $\ST$ is bigger than the $\ell$-value of $x$ in $\ST_\$$ by $\ell(par_\$(x))$, as all leaves contributing to $par_\$(x)$ in $\ST_\$$, e.g. $y$, switch to $x$ in $ST$ (see~\autoref{fig:subcases-c}). 
\end{proof}

\begin{lemma}
Let $x$ be a node of $\ST$. Upon transition from $\ST$ to $\ST_\$$, the value $d(x)$ of a node $x$ such that $par_\$(x)$ is implicit in $\ST$ increases by $\ell (par_\$(x))$. If $x$ is the twist node $t$, its $d$-value decreases by one. Finally, the $d$-values of all ancestors of the deepest \$-leaf $s$ increase by one. 
\end{lemma}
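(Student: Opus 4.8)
Throughout, write $\Delta d(x) := d_\$(x) - d(x)$. The plan is to read this off the definition $d(x) = |L_x| - \sum_{y\in V_x}\ell(y)$, tracking the two terms separately. Passing from $\ST$ to $\ST_\$$ only \emph{adds} material --- a single \$-leaf hung directly below every suffix node of $\ST$ that is not a leaf of $\ST$, and a single new explicit node in place of every implicit suffix node of $\ST$ (which, by the standing assumption of this section, sits at distance one below its parent) --- and never moves or deletes an existing node. Consequently $V_x \subseteq V_x^\$$ for every node $x$ of $\ST$, the old nodes of $V_x^\$$ are exactly $V_x$, and, writing $D(x)$ for the set of inserted \$-leaves lying in the subtree of $x$ and $N(x)$ for the set of inserted internal nodes there,
$$\Delta d(x) \;=\; \bigl(|L_x^\$|-|L_x|\bigr)\;-\;\sum_{y\in V_x}\bigl(\ell_\$(y)-\ell(y)\bigr)\;-\;\sum_{z\in D(x)}\ell_\$(z)\;-\;\sum_{q\in N(x)}\ell_\$(q).$$
Since every inserted leaf is a \$-leaf, $|L_x^\$|-|L_x| = |D(x)|$, so the argument reduces to evaluating the three sums on the right (all $\ell$-values of new nodes being taken in $\ST_\$$, where those nodes are explicit).

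For the $D(x)$-terms: $|D(x)|$ is the number of non-leaf suffix nodes of $\ST$ in the subtree of $x$, and by \autoref{lm:l(v)} each inserted \$-leaf carries $\ell_\$ = 1$ except the deepest one, $s$, which carries $\ell_\$ = 0$; hence $|D(x)| - \sum_{z\in D(x)}\ell_\$(z) = [\,s\text{ lies in the subtree of }x\,]$, which is the third clause. For the $N(x)$-terms I would set up the bijection $N(x)\leftrightarrow\{\,v\in V_x\setminus\{x\} : par_\$(v)\text{ is implicit in }\ST\,\}$, $q\mapsto v$, where $v$ is the unique non-\$-leaf child of $q$ in $\ST_\$$ (unique because $q$ lay on a single edge of $\ST$). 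The point is that an inserted internal node $q$ lies in the subtree of $x$ exactly when the upper endpoint of the $\ST$-edge carrying $q$ lies in $V_x$, i.e. when $v\in V_x$ and $v\neq x$; the ``$v\neq x$'' records precisely that when $par_\$(x)$ is implicit it lies \emph{above} $x$ and so is not an element of $N(x)$. Thus $\sum_{q\in N(x)}\ell_\$(q) = \sum_{v\in V_x\setminus\{x\},\; par_\$(v)\text{ impl.}}\ell_\$(par_\$(v))$. Finally \autoref{lm:twistnode} gives $\ell_\$(y)-\ell(y)$ node by node --- $+1$ at $y=t$, $-1$ at $y=par_\$(t)$ (which is the $\ST$-parent of $t$), $-\ell_\$(par_\$(y))$ at a node with implicit $par_\$$, and $0$ otherwise --- so that, using $[\,t\in V_x\,]-[\,par(t)\in V_x\,] = [\,x=t\,]$, one gets $\sum_{y\in V_x}\bigl(\ell_\$(y)-\ell(y)\bigr) = [\,x=t\,] - \sum_{v\in V_x,\; par_\$(v)\text{ impl.}}\ell_\$(par_\$(v))$. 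Substituting the three evaluations into the displayed identity, the two sums over implicit-parent nodes differ by exactly the single term at $v=x$, and everything collapses to $\Delta d(x) = [\,s\text{ in subtree of }x\,] - [\,x=t\,] + [\,par_\$(x)\text{ impl.}\,]\cdot\ell_\$(par_\$(x))$, which is the three clauses added together.

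The hard part is the $N(x)$ cancellation: verifying that $q\mapsto v$ is an honest bijection --- this is exactly where the standing assumption that every implicit suffix node of $\ST$ sits at distance one below its parent is used, guaranteeing that each inserted node has a single old child --- and being careful that $par_\$(x)$, when implicit, is an \emph{ancestor} of $x$, so that its $\ell_\$$-value is not subtracted out. It is also worth stressing that the three effects in the statement are not mutually exclusive: a node may simultaneously be the twist node, an ancestor of $s$, and have an implicit $par_\$$, and the change at it is then the sum of the applicable contributions, which the term-by-term bookkeeping above produces automatically. The remaining steps --- identifying inserted leaves with the non-leaf suffix nodes of $\ST$, and the telescoping of the twist-node $\pm 1$'s --- are routine once this framework is in place.
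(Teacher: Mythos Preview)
Your proof is correct and follows essentially the same route as the paper: both arguments expand the definition $d(x)=|L_x|-\sum_{y\in V_x}\ell(y)$ and feed in the $\ell$-value changes from \autoref{lm:twistnode} and the \$-leaf $\ell$-values from \autoref{lm:l(v)}. The paper handles the three clauses one at a time, noting informally that each change ``is compensated'' at proper ancestors, whereas you set up a single displayed identity for $\Delta d(x)$ and make the cancellations explicit via the bijection $N(x)\leftrightarrow\{v\in V_x\setminus\{x\}:par_\$(v)\text{ implicit}\}$ and the telescoping $[\,t\in V_x\,]-[\,par(t)\in V_x\,]=[\,x=t\,]$; this is more systematic and has the virtue of showing transparently that the three effects add when they overlap, but the underlying content is identical.
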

\begin{proof}
Remember that $d(x) = |L_x| - \sum_{\substack{y \in V_x}} \ell (y)$. If $par_\$ (x)$ is implicit in $\ST$, $\ell(x)$ decreases by $\ell(par_\$(x))$, i.e. $d(x)$ increases by $\ell(par_\$(x))$. Note that $d$-values of ancestors of $x$ are not affected since for them the decrease of $\ell(x)$ is compensated by the presence of $par_\$(x)$. The value $\ell(t)$ increases by one and results in decrease of $d(t)$ by one, but for other ancestors of $t$ increase of $\ell(t)$ will be compensated by decrease of $\ell (par_\$ (t))$. 

The value $\ell(s) = 0$ and the $\ell$-values of other \$-leaves are equal to one. Consequently, when we add the \$-leaves to $\ST$, $d$-values of ancestors of $s$ increase by one, and $d$-values of ancestors of other \$-leaves are not affected.  
\end{proof}

\begin{lemma}
Let $par_\$(x)$ be an implicit parent of a node $x \in \ST$. Then $d(par_\$(x))$ in $ST_\$$ is equal to $d(x)$ in $\ST$ if the node $par_\$(x)$ is not an ancestor of $s$, and $d(x)+1$ otherwise.
\end{lemma}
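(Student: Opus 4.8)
The plan is to compute $d(par_\$(x))$ in $\ST_\$$ directly from the definition $d(v)=|L_v|-\sum_{y\in V_v}\ell(y)$ and then rewrite every term using the two lemmas immediately above. Write $p=par_\$(x)$. First I would pin down the local shape of $p$: being a suffix node of $S$ that is implicit in $\ST$, by the standing assumption $p$ lies at distance one below its explicit parent, so it is the unique implicit suffix node in the interior of the edge $par(x)\rightarrow x$ of $\ST$, and attaching the single $\$$-leaf whose label is $\mathrm{label}(p)\,\$$ is exactly what makes $p$ explicit in $\ST_\$$. Hence $p$ has precisely two children in $\ST_\$$: the node $x$ (the only child belonging to $\ST$) and one $\$$-leaf, which I call $q$. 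Therefore $V_p=\{p\}\cup\{q\}\cup V_x$ and $L_p=\{q\}\cup L_x$ in $\ST_\$$ (disjoint unions), which gives
\[
d(p)=\bigl(1+|L_x|\bigr)-\bigl(\ell(p)+\ell(q)+\textstyle\sum_{y\in V_x}\ell(y)\bigr)=d(x)+1-\ell(p)-\ell(q),
\]
all $\ell$- and $d$-values on the right being taken in $\ST_\$$.

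Next I would substitute the change of $d(x)$. Since $par_\$(x)=p$ is implicit in $\ST$, whereas the twist node $t$ (if it exists) has $par_\$(t)$ explicit in $\ST$, we have $x\neq t$, so the ``twist'' correction does not apply to $x$. The preceding lemma then says that $d(x)$ in $\ST_\$$ equals $d(x)$ in $\ST$, plus $\ell(p)$, plus an extra $1$ exactly when $x$ is an ancestor of $s$. Plugging this into the displayed identity, the two copies of $\ell(p)$ cancel, and I am left with: $d(p)$ in $\ST_\$$ equals $d(x)$ in $\ST$, plus $1-\ell(q)$, plus $1$ when $x$ is an ancestor of $s$.

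Finally I would close with a case split on the $\$$-leaf $q$, using \autoref{lm:l(v)}. If $q\neq s$, then $\ell(q)=1$, and the only part of $p$'s subtree not lying under $x$ is $q\neq s$, so $p$ is an ancestor of $s$ if and only if $x$ is; the formula collapses to ``$d(par_\$(x))$ in $\ST_\$$ equals $d(x)$ in $\ST$, plus $1$ iff $p$ is an ancestor of $s$''. If $q=s$, then $p=par_\$(s)$ is an ancestor of $s$, $\ell(s)=0$, and $x$, being the sibling of $s$, is not an ancestor of $s$; the formula gives $d(p)=d(x)+1$, which again matches. In both cases we obtain exactly the stated equality.

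I expect the main obstacle to be the bookkeeping of the three simultaneous effects that passing from $\ST$ to $\ST_\$$ has on $d$-values. In particular one must verify carefully that the twist-node correction is inapplicable to our $x$ (i.e.\ that $x\neq t$, which follows because $par_\$(t)$ is explicit in $\ST$ while $p=par_\$(x)$ is not), and that the ``$+1$ on ancestors of $s$'' contribution is accounted for on exactly one of $x$ and $p$ in each of the two cases, so that it is neither dropped nor double-counted against the newly created $\$$-leaf $q$.
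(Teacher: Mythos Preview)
Your proof is correct. It differs from the paper's in the way you obtain the key identity for $d(p)$ in $\ST_\$$. You work directly from the definition $d(v)=|L_v|-\sum_{y\in V_v}\ell(y)$, decomposing the subtree at $p=par_\$(x)$ into $\{p\}$, the single $\$$-leaf $q$, and $V_x$, to get $d(p)=d(x)+1-\ell(p)-\ell(q)$ in $\ST_\$$; then you invoke the preceding lemma to convert $d(x)$ from $\ST_\$$ to $\ST$ (using $x\neq t$, which you justify correctly via the explicitness of $par_\$(t)$) and \autoref{lm:l(v)} to evaluate $\ell(q)$, finishing with a clean case split on $q=s$ versus $q\neq s$. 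The paper instead exploits the Eulerian balance of the suffix tour graph of $\ST_\$$ at the node $p$: incoming arcs $d(p)+\ell(p)$ must equal outgoing arcs $d(x)+d(q)$, which gives the same relation more quickly when $q\neq s$ (so $d(q)=0$), and then the previous lemma is invoked in the same way. Your route is a bit longer but entirely elementary and makes the $q=s$ case explicit, whereas the paper's Eulerian shortcut is slicker but leans on \autoref{lm:eulerian} and treats the ancestor-of-$s$ case more tersely. Both arguments ultimately rest on the previous lemma in the same way.
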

\begin{proof}
First consider the case when $par_\$(x)$ is not an ancestor of $s$. Remember that the suffix tour graph is an Eulerian graph. The node $par_\$(x)$ has $\ell(par_\$(x))$ incoming arcs produced from suffix links and $d(x)$ outgoing arcs produced from edges. Hence it must have $d(x) - \ell(par_\$(x))$ incoming arcs produces from edges, and this is equal to $d(x)$ in $\ST$. If $par_\$(x)$ is an ancestor of $s$, the $d$-value must be increased by one as in the previous lemma.
\end{proof}

Speaking in terms of suffix tour graphs, we make local changes when the node is the twist node $t$ or when the parent of a node is implicit in $\ST$, and add a cycle from the root to $s$ (increase of $d$-values of ancestors of $s$) and back via all $\$$-leaves. 


\section{A Suffix Tree Decision Algorithm\label{sec:algorithm}}
Given a tree $\tau= (V, E)$ annotated with a set of suffix links and a labelling function, we want to decide whether there is a string $S$ such that $\tau$ is the suffix tree of $S$ and it has all the properties described in~\autoref{lm:good_implicit_ST}. 

We assume that $\tau$ satisfies~\autoref{lm:preconditions}(1) and~\autoref{lm:preconditions}(2), which can be verified in linear time. We will not violate this while augmenting $\tau$. If $\tau$ is a suffix tree, the string depth of a node equals the length of the suffix link path starting at it. Consequently, string depths of all explicit internal nodes and lengths of all internal edges can be found in linear time. 

We replace the original problem with the following one: Can $\tau$ be augmented to become a \$-suffix tree? The deepest \$-leaf $s$ can either hang from a node of $\tau$, or from an implicit suffix node $par_\$(s)$ on an edge of $\tau$. In the latter case the distance from $par_\$(s)$ to the upper end of the edge is equal to one. That is, there are $O(n)$ possible locations of~$s$. For each of the locations we consider a suffix link path starting at its parent. The suffix link paths form a tree which we refer to as the suffix link tree. The suffix link tree can be built in linear time: For explicit locations the paths  already exist, and for implicit locations we can build the paths following the suffix link path from the upper end of the edge containing a location and exploiting the knowledge about lengths of internal edges. (Of course, if we see a node encountered before, we stop.)

If $\tau$ is a suffix tree, then it is possible to augment it so that its suffix tour graph will satisfy~\autoref{lm:preconditions}(3) and~\autoref{lm:STGisEulerian}. We remind that~\autoref{lm:preconditions}(3) says that for any node $x$ of the suffix tour graph $d(x)\ge 0$, and~\autoref{lm:STGisEulerian} says that the suffix tour graph contains a cycle going through the root and all leaves. We show that each of the conditions can be verified for all possible ways to augment $\tau$ by a linear time traverse of $\tau$ or the suffix link tree. We start with~\autoref{lm:preconditions}(3). 

\begin{lemma}
If $\tau$ can be augmented to become a \$-suffix tree, then $\forall x \; d(x) \ge -1$.
\end{lemma}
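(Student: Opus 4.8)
The plan is to read the bound off from the lemmas above that describe how the $d$-values change between $\ST$ and $\ST_\$$, together with \autoref{lm:preconditions}(3). Suppose $\tau$ can be augmented to a \$-suffix tree. Removing the added \$-leaves and their parents shows that $\tau$ is itself a suffix tree, so by \autoref{lm:good_implicit_ST} we may assume it is realized by a string $S$ in which every edge carries at most one implicit suffix node, located at distance one below its upper endpoint. Put $\ST = \tau$ and let $\ST_\$$ be the \$-suffix tree of $S\$$; this is an augmentation of $\tau$, and being a \$-suffix tree it satisfies \autoref{lm:preconditions}(3), so $d(z) \ge 0$ in $\ST_\$$ for every node $z$ of $\ST_\$$. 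The task is to transfer this nonnegativity back to $\ST$ at the cost of at most one.

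Next I would fix a node $x$ of $\ST$. Since $x$ is explicit in $\ST$, it survives in $\ST_\$$, and its parent there, $par_\$(x)$, is either $par(x)$ or the node at distance one below $par(x)$ on the edge $(par(x) \rightarrow x)$, which is implicit in $\ST$; this splits into two cases. If $par_\$(x) = par(x)$ is explicit in $\ST$, then the second-to-last lemma above says that passing from $\ST$ to $\ST_\$$ alters $d(x)$ only by subtracting one (if $x$ is the twist node $t$) and by adding one (if $x$ is an ancestor of the deepest \$-leaf $s$); hence the value of $d(x)$ in $\ST$ is at least its value in $\ST_\$$ minus one, which is at least $-1$. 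If instead $par_\$(x)$ is implicit in $\ST$, then $par_\$(x)$ is a branching node of $\ST_\$$, and the last lemma above states that $d(par_\$(x))$ in $\ST_\$$ equals $d(x)$ in $\ST$, or $d(x)$ in $\ST$ plus one when $par_\$(x)$ is an ancestor of $s$; since $d(par_\$(x)) \ge 0$ in $\ST_\$$, we again get $d(x) \ge -1$ in $\ST$. As $d(x)$ in $\ST$ is exactly $d(x)$ in $\tau$, this proves the lemma.

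The step I would watch most carefully is pairing $x$ with the right lemma and the right $\pm 1$ corrections according to where $x$ lies relative to the freshly added \$-branch and the twist node. The one genuinely important point is that in the implicit-parent case one must evaluate $d$ in $\ST_\$$ at $par_\$(x)$, not at $x$ itself: the change of $d(x)$ there may be as large as $\ell(par_\$(x))$, which would give only the useless bound $d(x) \ge -1 - \ell(par_\$(x))$, whereas passing to the newly created parent absorbs exactly that contribution and fixes the constant at $-1$. Beyond this bookkeeping no real computation is required.
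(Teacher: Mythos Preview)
Your argument is correct and follows the same route as the paper: use the lemmas of \autoref{sec:st-stg} describing how $d$-values change between $\ST$ and $\ST_\$$, together with \autoref{lm:preconditions}(3), and in the implicit-parent case compare with $d(par_\$(x))$ rather than $d(x)$. The paper's proof is terser and leaves implicit the step you make explicit, namely invoking \autoref{lm:good_implicit_ST} to ensure the hypotheses of the Section~\ref{sec:st-stg} lemmas (one implicit suffix node per edge, at distance one) are met; this is a useful clarification but not a different idea.
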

\begin{proof}
The value $d(x)$ increases only when $x$ is an ancestor of $s$ or when $par_\$(x)$ is implicit in $\ST$. In the first case it increases by one. Consider the second case. Remember that $d(par_\$(x))$ is equal to $d(x)$ or to $d(x) + 1$ if it is an ancestor of~$s$. Since in a \$-suffix tree all $d$-values are non-negative, we have $d(x) \ge -1$ for any node $x$.
\end{proof}
\paragraph{Step 1.} We first compute all $d$-values and all $\ell$-values. If $d(x) \le -2$ for some node $x$ of~$\tau$, then $\tau$ cannot be augmented to become a \$-suffix tree and hence it is not a suffix tree. From now on we assume that $\tau$ does not contain such nodes. All nodes $x$ with $d(x) = -1$, except for at most one, must be ancestors of $s$. If there is a node with a negative $d$-value that is not an ancestor of $s$, then it must be the lower end of the edge containing $par_\$(s)$, and the $d$-value must become non-negative after we augment $\tau$. 

We find the deepest node $x$ with $d(x) = -1$ by a linear time traverse of $\tau$. All nodes with negative $d$-values must be its ancestors, which can be verified in linear time.
%
%
If this is not the case, $\tau$ is not a suffix tree. Otherwise, the possible locations for the parent of $s$ are descendants of $x$ and the implicit location on the edge to $x$ if $d(x)+\ell(x)$, the $d$-value of $x$ after augmentation, is at least zero. We cross out all other locations.

\paragraph{Step 2.} For each of the remaining locations we consider the suffix link path starting at its parent. If the implicit node $q$ preceding the first explicit node $p$ in the path belongs to a leaf edge then the twist node $t$ is present in $\tau$ and will be a child of $p$. We cannot tell which child though, since we do not know the first letter on the leaf edge outgoing from $q$. However, we know that $d(t)$ decreases by~$1$ after augmentation, and hence $d(t)$ must be at least $0$. Moreover, if $d(t)=0$ the twist node $t$ must be an ancestor of~$s$ to compensate for the decrease of~$d(t)$. 

In other words, a possible location of $s$ is crossed out if the twist node $t$ is present but $p$ has no child $t$ that satisfies $d(t) > 0$ or $d(t) = 0$ and $t$ is ancestor of~$s$. For each of the locations of $s$ we check if $t$ exists, and if it does, we find~$p$~(i1). This can be done in linear time in total by a traverse of the suffix link tree. We also compute for every node if it has a child $u$ such that $d(u) > 0$ (i2). Finally, we traverse $\tau$ in the depth-first order while testing the current location of $s$. During the traverse we remember, for any node on the path to $s$, its child which is an ancestor of $s$ (i3). With the information (i1), (i2), and (i3), we can determine if we cross out a location of $s$ in constant time, and hence the whole computation takes linear time.

\paragraph{Step 3.} We assume that the suffix tour graph of $\tau$ is an Eulerian graph, otherwise $\tau$ is not a suffix tree by~\autoref{lm:STGisEulerian}. This condition can be verified in linear time. When we augment $\tau$, we add a cycle $C$ from the root to the deepest \$-leaf $s$ and back via \$-leaves. The resulting graph will be an Eulerian graph as well, and one of its connected components (cycles) must contain the root and all leaves of $\tau$. 

We divide $C$ into three segments: the path from the root to the parent $par(x)$ of the deepest node $x$ with $d(x) = -1$, the path from $par(x)$ to $s$, and the path from $s$ to the root. We start by adding the first segment to the suffix tour graph. This segment is present in the cycle $C$ for any choice of $s$, and it might actually increase the number of connected components in the graph. (Remember that if $C$ contains an edge $x \rightarrow y$ and the graph contains an edge $y \rightarrow x$, then the edges eliminate each other.)

The second segment cannot eliminate any edges of the graph, and if it touches a connected component then all its nodes are added to the component containing the root of $\tau$. Since the third segment contains the \$-leaves only, the second segment must go through all connected components that contain leaves of $\tau$. We paint nodes of each of the components into some color. And then we perform a depth-first traverse of $\tau$ maintaining a counter for each color and the total number of distinct colors on the path from the root to the current node. When a color counter becomes equal to zero, we decrease the total number of colors by one, and when a color counter becomes positive, we increase the total number of colors by one. If a possible location of $s$ has ancestors of all colors, we keep it.

\begin{lemma}
\label{lm:ST}
The tree $\tau$ is a suffix tree iff there is a survived location of $s$.
\end{lemma}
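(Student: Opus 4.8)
The plan is to prove both directions by connecting "survived locations of $s$" with the characterization of \$-suffix trees given in \autoref{lm:STGisEulerian}, using the structural relationship between the suffix tour graphs of $\ST$ and $\ST_\$$ established in \autoref{sec:st-stg}.

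For the forward direction, suppose $\tau$ is a suffix tree. By \autoref{lm:good_implicit_ST} we may assume $\tau$ is realized by a string $S'$ such that every edge contains at most one implicit suffix node, at distance one from its upper end, and the minimal leaf edge has length one. Let $\ST_\$$ be the \$-suffix tree of $S'$; its topology is obtained from $\tau$ by adding the \$-leaves, their parents, and the suffix links between consecutive parents, exactly as described in \autoref{sec:st-stg}. The deepest \$-leaf $s$ then hangs either from an explicit node of $\tau$ or from an implicit node at distance one from the upper end of some edge --- so it is one of the $O(n)$ candidate locations the algorithm enumerates. It remains to check that this particular location survives all three steps. Step~1 survives because $\ST_\$$ is a genuine \$-suffix tree, so by \autoref{lm:preconditions}(3) all $d$-values in its suffix tour graph are $\ge 0$; combining this with the lemmas describing how $d$-values change (the lemma bounding $d(x)\ge -1$ in $\tau$, and the lemmas relating $d(par_\$(x))$ to $d(x)$), one sees that the deepest node with $d=-1$ must lie on the path to $s$ and that all negatively-valued nodes are its ancestors. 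Step~2 survives because, after augmentation, $d(t)$ decreases by one yet must stay $\ge 0$ in $\ST_\$$; if $d(t)=0$ in $\ST$ then $t$ is forced to be an ancestor of $s$ to absorb the $+1$ from the $s$-cycle --- precisely the condition the algorithm tests. Step~3 survives because the suffix tour graph of $\ST_\$$ contains, by \autoref{lm:STGisEulerian}, a single cycle through the root and all leaves; decomposing the added cycle $C$ into its three segments and using that the second segment cannot cancel edges, one concludes that in $\tau$'s suffix tour graph (after adding the first segment) the path from $par(x)$ to $s$ must touch every colored component containing a leaf, i.e.\ $s$ has ancestors of all colors.

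For the converse, suppose some location of $s$ survives all three steps. The plan is to reverse the construction: take the suffix tour graph of $\tau$, add the cycle $C$ from the root to $s$ and back through the \$-leaves (adjusting $d$- and $\ell$-values exactly as prescribed by the lemmas of \autoref{sec:st-stg}), and argue that the result is the suffix tour graph of a tree $\tau_\$$ that satisfies \autoref{lm:preconditions}(1)--(3) and the cycle condition of \autoref{lm:STGisEulerian}. Conditions (1) and (2) are preserved by construction (the suffix link tree built in the algorithm provides the required suffix-link paths, and the first letters on the new \$-edges are all the fresh symbol \$). Condition (3) --- all $d$-values non-negative --- follows from surviving Steps~1 and~2: the only nodes whose augmented $d$-value could be negative are the deepest node $x$ with $d(x)=-1$ (whose $d$-value becomes $d(x)+\ell(x)\ge 0$ by the Step~1 check) and the twist node $t$ (handled by the Step~2 check), and every other node already had $d\ge 0$ or gained from being an ancestor of $s$. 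The cycle condition follows from surviving Step~3: after adding the first segment of $C$ the graph is still Eulerian, and the second segment, touching components of all colors, merges every leaf-containing component with the root's component, so the final graph has a single cycle through the root and all leaves. By \autoref{lm:STGisEulerian}, $\tau_\$$ is then a \$-suffix tree, realized by some string $S\$$; deleting the trailing \$ (and, if necessary, trimming the extra characters accounted for in \autoref{lm:good_implicit_ST}) yields a string whose suffix tree is $\tau$, so $\tau$ is a suffix tree.

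The main obstacle I expect is the bookkeeping in the converse direction: one must verify that the graph obtained by literally adding the cycle $C$ to $\tau$'s suffix tour graph genuinely equals the suffix tour graph of the augmented tree $\tau_\$$ --- i.e.\ that the local changes to $d$- and $\ell$-values induced by the twist node and by implicit parents, as catalogued in the four lemmas of \autoref{sec:st-stg}, are exactly compensated by adding $C$, with no leftover discrepancies at ancestors of $s$ or near the edge carrying $par_\$(s)$. Getting the edge-cancellation accounting in Step~3 right (which edges of $C$ annihilate existing edges, and why the second segment never does) is the delicate part; everything else is a routine matching of the algorithm's three checks against conditions \autoref{lm:preconditions}(3) and \autoref{lm:STGisEulerian}.
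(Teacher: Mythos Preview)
Your plan follows essentially the same route as the paper: the forward direction is immediate (if $\tau$ is a suffix tree, the actual location of $s$ in the corresponding $\ST_\$$ passes all three filters), and the converse reconstructs a \$-suffix tree from a surviving location by verifying \autoref{lm:preconditions}(3) and the cycle condition of \autoref{lm:STGisEulerian} on the augmented suffix tour graph. The paper's own proof is considerably terser, but the skeleton is the same.

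The one place where your plan diverges from the paper is the ``main obstacle'' you flag at the end. You try to argue that adding the cycle $C$ to the suffix tour graph of $\tau$ yields \emph{exactly} the suffix tour graph of $\tau_\$$, and worry about leftover discrepancies from the implicit-parent local changes. The paper does not attempt this equality. Instead it argues in two stages: first, adding $C$ alone already gives $d(x)\ge 0$ everywhere and a cycle through the root and all leaves; second, the remaining local changes (those of \autoref{fig:subcases-c}, where an edge $y\to x$ is replaced by the path $y,\,par_\$(x),\,x$) are applied afterwards and are harmless, because each merely subdivides an edge of the graph into a two-edge path, so the existing cycle can be re-routed through the new intermediate node and still visits the root and every leaf. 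This sidesteps the bookkeeping you anticipated; you do not need the augmented graph to equal the suffix tour graph of $\tau_\$$ before the local changes, only after, and the local changes cannot destroy the cycle property.
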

\begin{proof}
If there is such a location, then for any $x$ in the suffix tour graph of the augmented tree we have $d(x) \ge 0$ and there is a cycle containing the root and all leaves. We are still to apply the local changes caused by implicit parents. Namely, for each node $x$ with an implicit parent the edge from $y$ to $x$ is to be replaced by the path $y, par_\$(x), x$ (see~\autoref{fig:subcases-c}). The cycle can be re-routed to go via the new paths instead of the edges, and it will contain the root and the leaves of $\tau$. Hence, the augmented tree is a \$-suffix tree and $\tau$ is a suffix tree. 

If $\tau$ is a suffix tree, then it can be augmented to become a \$-suffix tree. The parent of $s$ will survive the selection process.
\end{proof}

Suppose that there is such a location. Then we can find the parent of the twist node if it exists. The parent must have a child $t$ such that either $d(t) > 0$ or $d(t) = 0$ and $t$ is an ancestor of $s$, and we choose $t$ as the twist node. Let the first letter on the edge to the twist node be $a$. Then we put the first letter on all new leaf edges caused by the implicit nodes equal to $a$. The resulting graph will be the suffix tour graph of a \$-suffix tree. We can use the solution of I~et~al.~\cite{InferringStrings} to reconstruct a string $S\$$ realizing this \$-suffix tree in linear time. The tree $\tau$ will be a suffix tree of the string $S$. This completes the proof of~\autoref{thm:main}.

\section{Conclusion and Open Problems}
We have proved several new properties of suffix trees, including an upper bound of $n-1$ on the length of a shortest string $S$ realizing a suffix tree $\tau$ with $n$ nodes. As noted this bound is tight in terms of $n$, since the number of leaves in $\tau$, which can be $n-1$, provides a trivial lower bound on the length of $S$.

Using these properties, we have shown how to decide if a tree $\tau$ with $n$ nodes is a suffix tree in $O(n)$ time, provided that the suffix links of internal nodes and the first letter on each edge is specified. It remains an interesting open question whether the problem can be solved without first letters or, even, without suffix links (i.e., given only the tree structure).

Our results imply that the set of all \$-suffix trees is a proper subset of the set all of suffix trees (e.g., the suffix tree of a string $abaabab$ is not a \$-suffix tree by~\autoref{lm:STGisEulerian}), which in turn is a proper subset of the set of all trees (consider, e.g., \autoref{fig:not_st} or simply a path of length $2$).


\bibliographystyle{abbrv}
\bibliography{main}

\end{document}